\newcommand{\ie}{{\em i.e., }}
\newcommand{\eg}{{\em e.g., }}
\newcommand{\etal}{\emph{ et al. }}
\newcommand{\lastfm}{{\tt Last.fm}\xspace}
\newcommand{\fig}[1]{Fig.~\ref{#1}}
\newcommand{\eq}{\!\!=\!\!}
\newtheorem{theorem}{Theorem}[section]
\newtheorem{proposition}[theorem]{Proposition}
\newtheorem{definition}{Definition}
\def\ps@headings{%
\def\@oddhead{\mbox{}\scriptsize\rightmark \hfil \thepage}%
\def\@evenhead{\scriptsize\thepage \hfil \leftmark\mbox{}}%
\def\@oddfoot{}%
\def\@evenfoot{}}
\begin{document}

\title{Multigraph Sampling of Online Social Networks}
\author{Minas Gjoka\\
CalIT2\\
UC Irvine\\
{\em mgjoka@uci.edu}\\
\and
Carter T. Butts\\
Sociology Dept\\
UC Irvine\\
{\em buttsc@uci.edu}\\
\and
Maciej Kurant\\
CalIT2\\
UC Irvine\\
{\em maciej.kurant@gmail.com}\\
\and
Athina Markopoulou\\
EECS Dept\\
UC Irvine\\
{\em athina@uci.edu}
}

\maketitle

\begin{abstract}
State-of-the-art techniques for probability sampling of users of online social networks (OSNs) are based on random walks on a single social relation (typically friendship).  While powerful, these methods rely on the social graph being fully connected. Furthermore, the mixing time of the sampling process strongly depends on the characteristics of this graph.

In this paper, we observe that there often exist other relations between OSN users, such as membership in the same group or participation in the same event. We propose to exploit the graphs these relations induce, by performing a random walk on 
their \emph{union multigraph}. 
We design a computationally efficient way to perform multigraph sampling by randomly selecting the graph on which to walk
at each iteration. %
We demonstrate the benefits of our approach through (i)~simulation in synthetic graphs, and (ii)~measurements of \lastfm - an Internet website for music with social networking features.
More specifically, we show that multigraph sampling can obtain a representative sample and faster convergence, even when the individual graphs fail, \ie are disconnected or highly clustered.
\end{abstract}

\begin{keywords}
Sampling methods, Social network services, Last.fm, Random walks, Multigraph, Graph sampling.
\end{keywords}

\section{Introduction}

The popularity of Online Social Networks (OSNs) has skyrocketed within the past decade, with the most popular having at present hundreds of millions of users (a number that continues to grow apace). This success has inspired a number of measurement and characterization studies, as well as studies of the interaction between OSN structure and systems design, and of user behavior within OSNs.  Despite their attractions, the large size and access limitations of most OSN services (\eg API query limits, treatment of user data as proprietary) make it difficult or impossible to obtain a complete census of user accounts and/or topology.  Sampling methods are thus essential for practical estimation of OSN properties.  While sampling can, in principle, allow precise inference from a relatively small number of observations, this depends critically on the ability to draw a sample with known statistical properties.
The lack of a sampling frame (\ie a complete list of users, from which individuals can be directly sampled) for most OSNs makes principled sampling especially difficult; recent work in this area has thus focused on sampling methods that evade this limitation.

Key to current sampling schemes is the fact that OSN users are, by definition, connected to one another via some relation, referred to here as the ``social graph.''  Specifically, samples of OSN users can be obtained by crawling the OSN social graph, obviating the need for a sampling frame.  An early family of crawling techniques followed BFS/Snowball-type approaches, where nodes of a graph reachable from an initial seed are explored exhaustively \cite{Ahn-WWW-07, Mislove2007, Wilson09}. It is now well-known that these techniques produce biased samples with poor statistical properties when the full graph is not
covered
\cite{Lee-Phys-Rev-06, Kurant2010, Gjoka2010}.  A more recent body of work employs systematic random walks on the social graph, and can achieve an asymptotic probability sample of users by online or a posteriori correction for the (known) bias induced by the crawling process \cite{Gjoka2010,Rasti2008}.  While random walk sampling can be very effective, its success is ultimately dependent on the connectivity of the underlying social graph.  More specifically, random walks can yield a representative sample of users only if the social graph is fully connected.  Furthermore, the speed with which the random walks converge to the target distribution strongly depends on characteristics of the graph, \eg clustering.

\begin{figure*}[t!]
\centering
\subfigure[Friendship graph]{
\includegraphics[scale=0.30, angle=0]{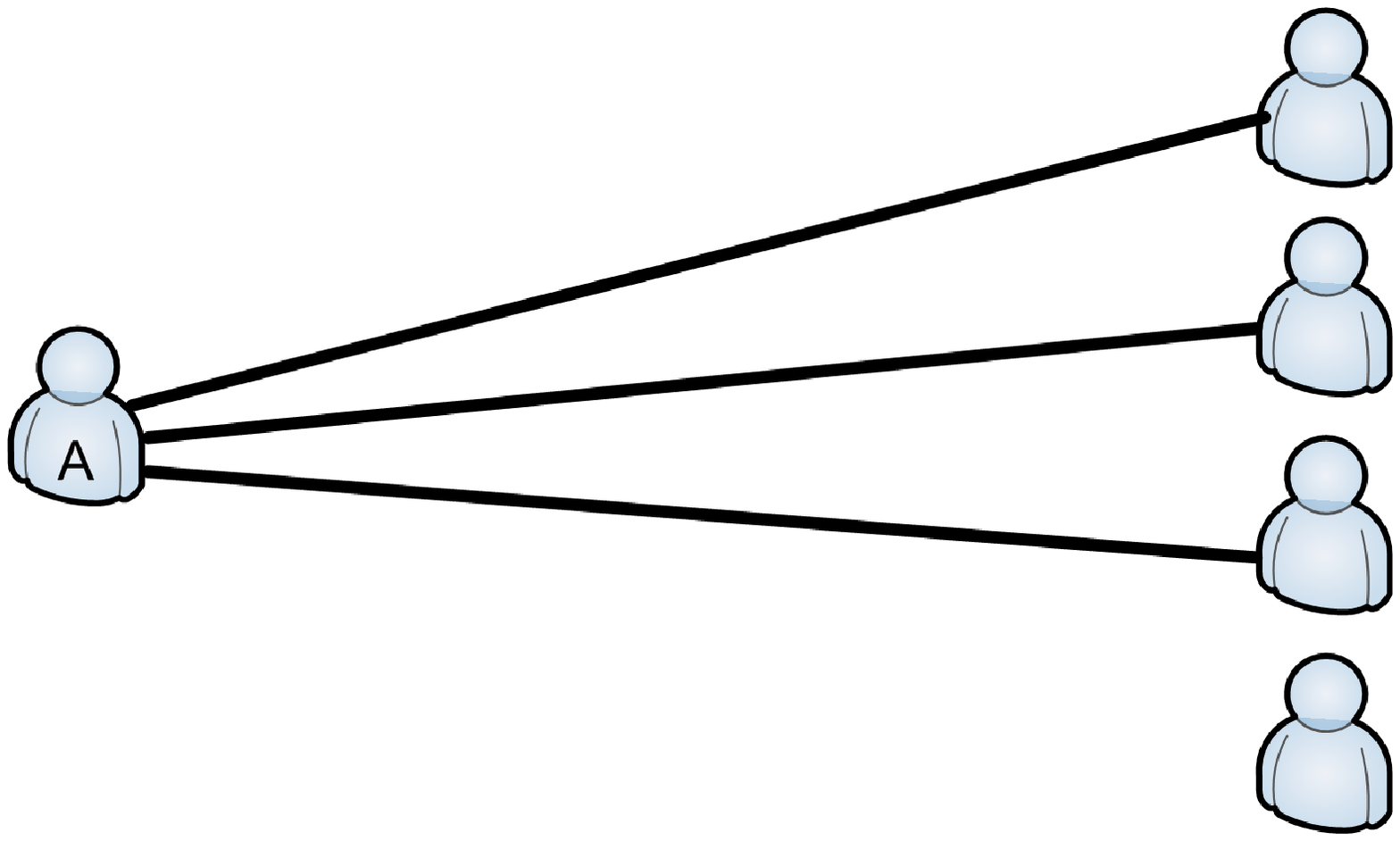}}\hspace{20pt}
\subfigure[Group graph]{
\includegraphics[scale=0.30, angle=0]{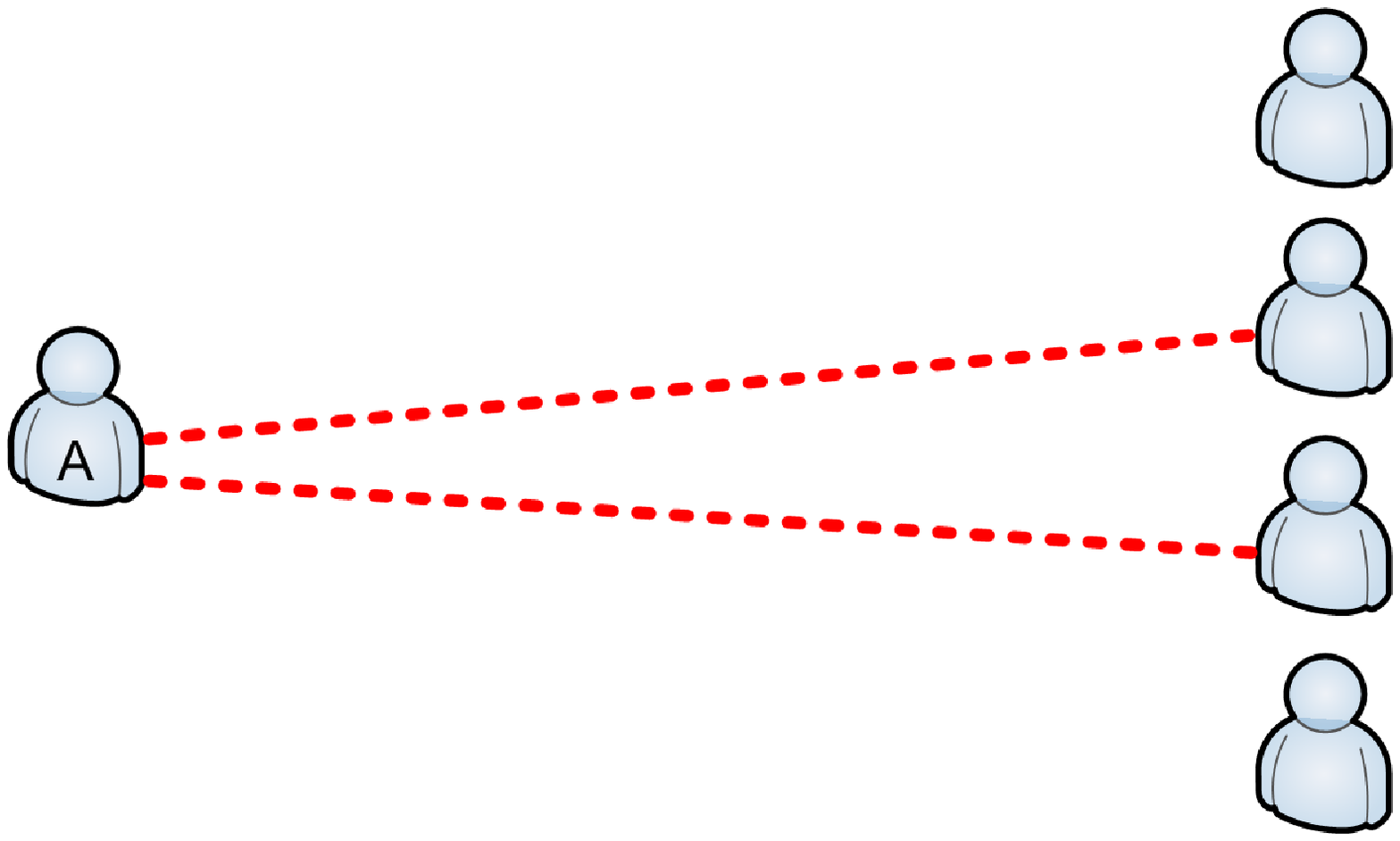}}\hspace{20pt}
\subfigure[Event graph]{
\includegraphics[scale=0.30, angle=0]{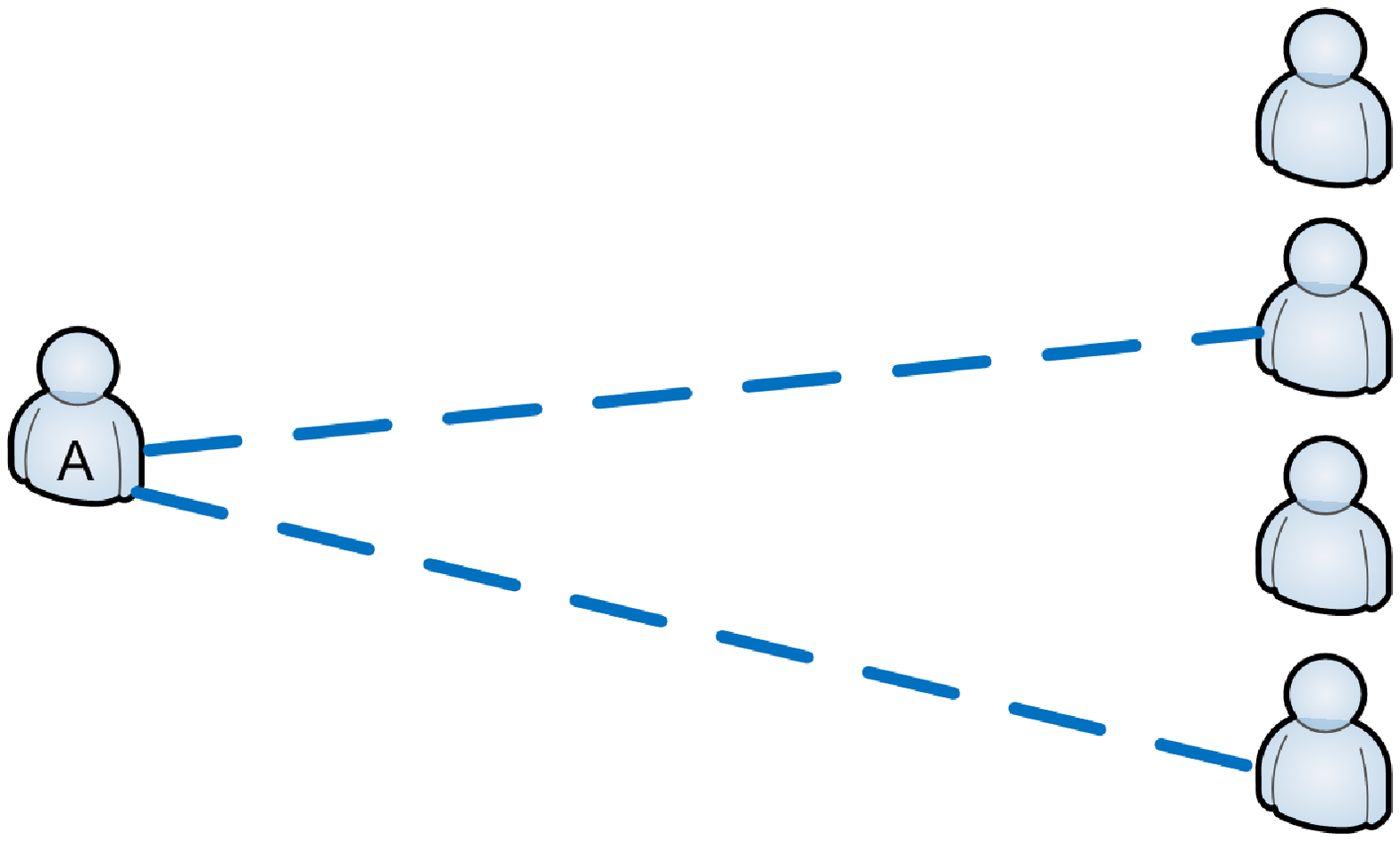}}\hspace{20pt}
\subfigure[Union simple graph]{
\includegraphics[scale=0.30, angle=0]{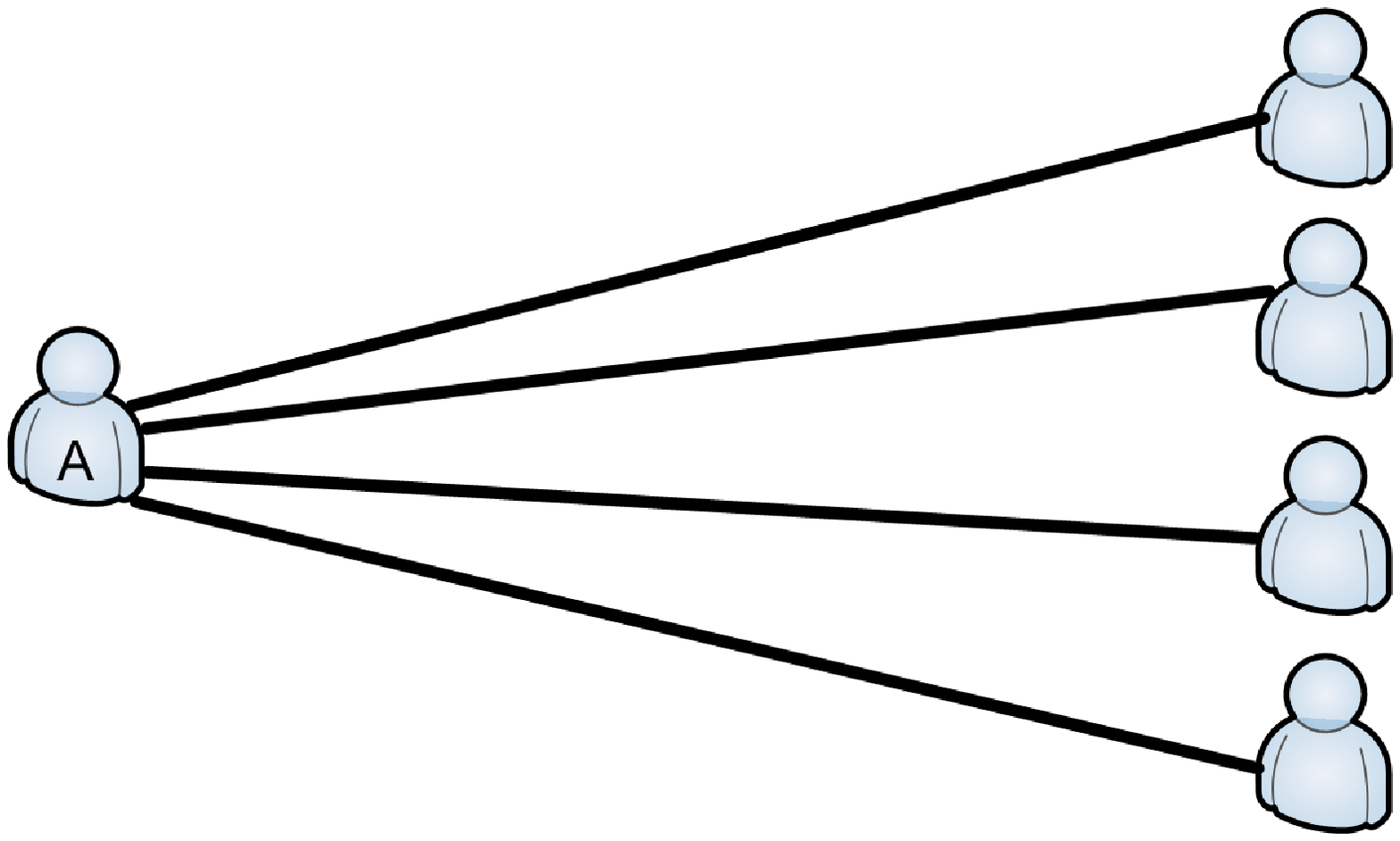}}\hspace{20pt}
\subfigure[The union multigraph contains {\em all} edges in the simple graphs]{
\includegraphics[scale=0.30, angle=0]{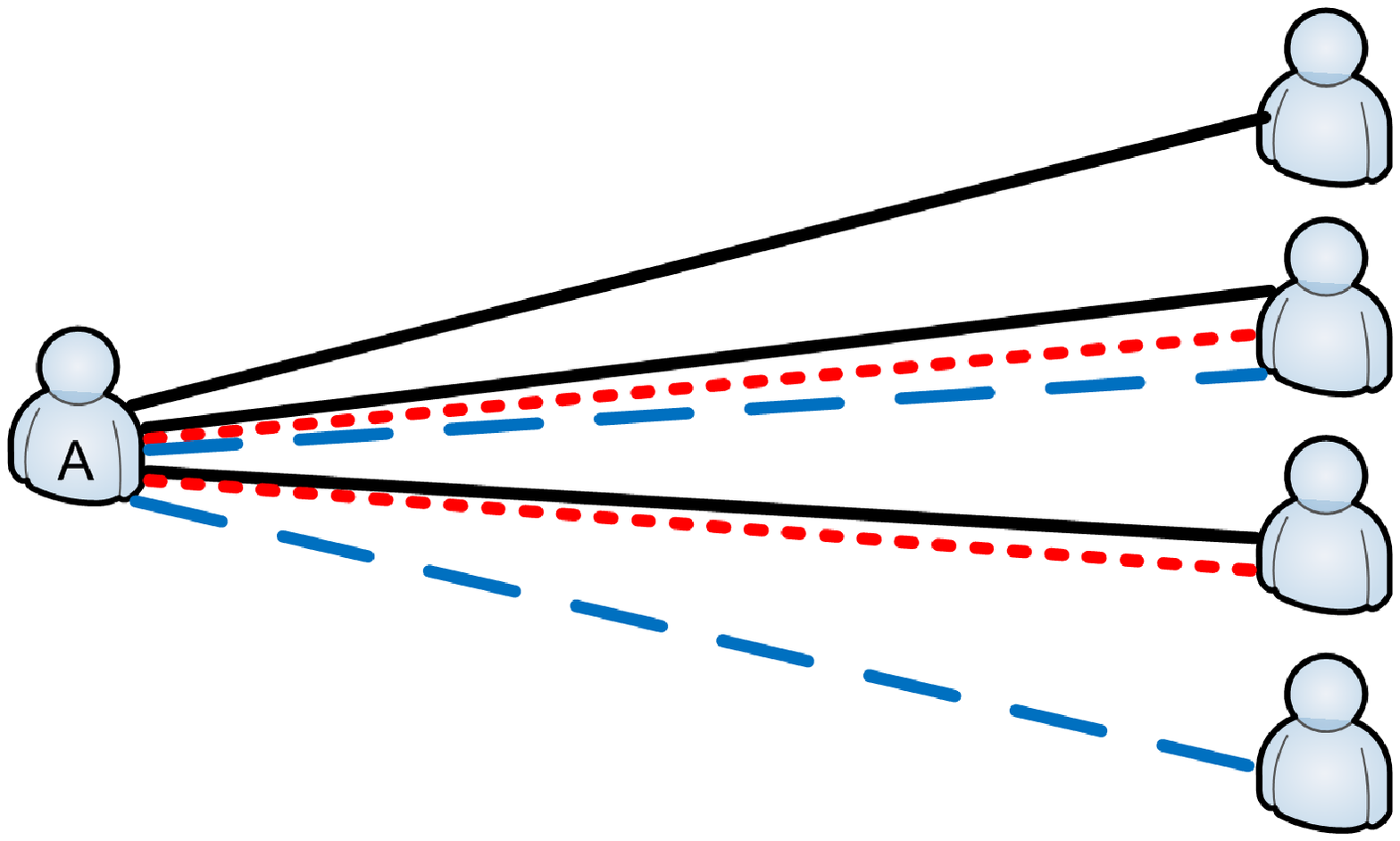}}\hspace{20pt}
\subfigure[An equivalent way of thinking the multigraph as ``mixture'' of simple graphs.]{
\includegraphics[scale=0.30, angle=0]{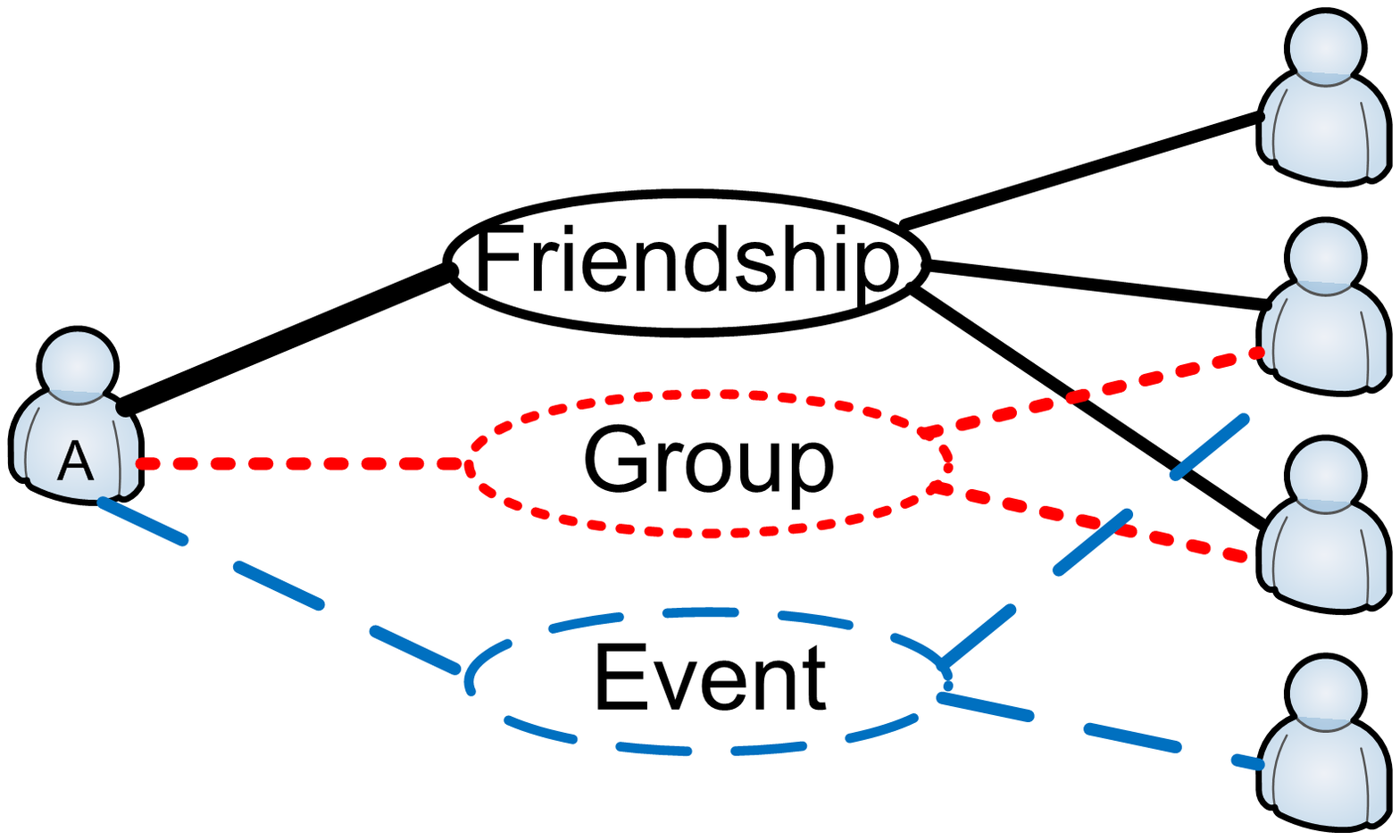}}
\caption{Multigraph sampling illustration. \textbf{(a-c)}~Graphs for three different relation~$G_i$: Friendship, Group and Event. \ \textbf{(d)}~Union (simple) graph, as presented in Definition~\ref{def:uniongraph}. \ \textbf{(e)}~Union multigraph, as presented in Definition~\ref{def:multigraph}. Node $A$ has degrees $d_1(A)\eq 3$, $d_2(A)\eq 2$ and  $d_3(A)\eq 2$ in the Friendship, Group and Event graphs, respectively. Its total degree in the union multigraph is %
$d(A)\eq 7$. \
\textbf{(f)}~An alternative view of the union multigraph. The neighbor selection Algorithm~\ref{alg:mixture}
first selects a graph~$G_i$ with probability $\frac{d_i(A)}{d(A)}$. Next, it picks a random neighbor in the selected graph~$G_i$, \ie with probability $\frac{1}{d_i(A)}$.
}
\label{fig:multigraph}
\end{figure*}

In this paper, we start from the observation that in OSNs, there are often multiple relations connecting the nodes. For example, users may be linked not only by direct social ties, but also by being members of the same group, participating in the same event, or using the same application. Moreover, many systems allow all neighbors in such relations to be enumerated (either through scraping or API calls). In other words, there often exist multiple, crawlable relation graphs---including but not limited to ties like ``friendship''---defined on the same set of nodes. We propose to exploit such multiple-relation (\ie multiplex) graphs by giving a crawler more edges to choose from,
compared to a crawler restricted to one relation only (typically the social graph). For example, we might be able to discover users that have no direct social ties, which is impossible by crawling the social graph alone.

There are many ways one can exploit multiplex graphs.
A naive approach would be to run many crawlers, one on each individual relation graph, and then combine the collected samples. However, this technique yields biased samples %
if any individual relation graph is fragmented, and fails to exploit opportunities for convergence acceleration by mixing across relations. A better approach is to combine all individual relation graphs into a single \emph{union} (simple) \emph{graph}: the resulting union graph is frequently connected even if its constituent graphs are not. Moreover, the union graph may also be less tightly clustered than its constituents, helping a crawler to converge faster than on the individual graphs.
However, walking on the union graph requires, at every step, the enumeration of all neighbors in all relations, which can be costly in time and bandwidth. Instead we propose a third, cost-efficient approach.

We propose a novel two-stage algorithm that walks on the \emph{union multigraph}. Our multigraph sampling first selects the relation on which to walk and then enumerates the neighbors with regards to that relation only, which makes it, in practice, even more efficient than union graph sampling. We prove that this algorithm achieves convergence to the proper equilibrium distribution when the union multigraph is connected.
We also demonstrate the benefits of multigraph sampling in two settings:
(i)~by simulation of synthetic random graphs; and
(ii)~by measurements of {\tt Last.fm} - an Internet website for music with social networking features.
We chose {\tt Last.fm} as an example of a network that is highly fragmented with respect to the social graph as well as other relations. We show that multigraph sampling can obtain a representative sample when each individual graph is disconnected.  Along the way, we also give practical guidelines on how to efficiently implement multigraph sampling for OSNs more generally.

The structure of the rest of the paper is as follows.  Section~\ref{sec:mg_samplingmethodology} describes our sampling methodology. %
Section~\ref{sec:mg_synthetic} evaluates our methodology on synthetic graphs. Section ~\ref{sec:mg_lastfm} applies our methodology to sample {\tt Last.fm} and provides practical recommendations. Section~\ref{sec:mg_background} discusses related work. 
Finally, Section~\ref{sec:mg_conclude} concludes the paper.

\section{Sampling Methodology\label{sec:mg_samplingmethodology}}

\subsection{Terminology and Definitions}

We consider different sets of edges $\mathcal{E}=\{E_1, \ldots, E_Q\}$ on a common set of users $V$.  Each $E_i$ captures a symmetric relation between users, such as friendship or group co-membership. $(V, E_i)$ thus defines an undirected graph $G_i$ on $V$.
We make no assumptions of connectivity or other special properties of each~$G_i$.
Fig~\ref{fig:multigraph}(a-c) shows an example of $Q\eq 3$ different relations and relation graphs~$G_i$ defined on the same 5 nodes.
Fig.~\ref{fig:bernoulli-example}(a-e) shows $Q\eq 5$ such graphs defined on a set of 50 nodes.

Consider set of graphs $G_i=(V, E_i), i=1,\ldots,Q$, defined on a common node set $V$. $\mathcal{E}$ can be used to construct several types of combined structures on $V$. We will employ the following two such structures:

\begin{definition}\label{def:uniongraph}
The {\em union (simple) graph} $G'=(V,E')$ of $G_1,\ldots,G_Q$ is defined as the graph on $V$,  whose edges are given by the set  $E'=\cup_{i=1}^Q E_i$. %
\hfill $\blacksquare$
\end{definition}

\begin{definition}\label{def:multigraph}
The {\em union multigraph} $G=(V,E)$ of $G_1,\ldots,G_Q$ is defined as the multigraph on $V$, whose edges are given by the multiset $E=\biguplus_{i=1}^Q E_i$. 
\hfill $\blacksquare$
\end{definition}

Note that the union multigraph $G$ can contain multiple edges between a pair of nodes, while the union graph $G'$ contains only one (or no) edge. 
Every multigraph  $G$ can be reduced to the union graph $G'$ by merging together multiple edges between two nodes into one.
Our focus, in this paper, is on the union multigraph, also referred to as simply the {\em multigraph}, because it allows us to more efficiently implement sampling on multiple relations. However, we also use the union graph as a helpful conceptual tool.

%
%
%

%
%
%
%
%

\subsection{Some False Starts}

We seek to draw a sample of the nodes in $V$, so that the draws are (at least approximately) independent and the sampling probability of each node is known up to a constant of proportionality.  There are several ways to achieve this goal using multiple graphs. We discuss some of them below.

\subsubsection{Naive Multiple Graph Sampling}

A naive way is to run many random walks, one per each individual graph~$G_i$, and to combine the collected samples.
However, if a particular~$G_i$ is disconnected (as are all five graphs in Fig.~\ref{fig:bernoulli-example}(a-e)),
a walk on $G_i$ is restricted to the connected component around its starting point and thus never converges to the desired target distribution.
This results in an asymptotically biased sample, dependent on the initial seeds.

\subsubsection{Union Simple Graph Sampling}\label{subsec:Union Simple Graph Sampling}
A much better approach is to perform a random walk on the union graph~$G'$.
We show examples of union graphs in Fig~\ref{fig:multigraph}(d) and Fig.~\ref{fig:bernoulli-example}(f). Note that although in both cases the individual graphs are disconnected, their union graphs are well-connected, which allows for a quick convergence of the random walk.

A potential practical difficulty with a random walk on the union graph~$G'$ is that, at each step, computing the neighborhood union can be quite expensive: it requires the enumeration of all edges adjacent to the current vertex~$v$, in each relation graph~$G_i$. This may be very costly, depending on $v$'s neighborhood size (which can be large in heavily clustered relations, such as group co-membership), query costs, and the number of relations~$Q$.

\begin{figure}[t]\label{ex:bernoulli-example}
\centering
\includegraphics[scale=0.32]{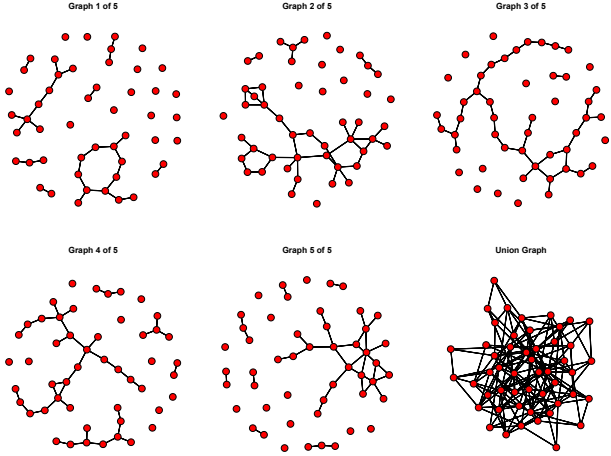}
\caption{Example of multiple graphs vs. their union. Five draws from a random ($N,p$) graph with $N=50$ and expected degree 1.5 are depicted in (a)-(e). Each simple graph is disconnected, while their union graph $G'$, depicted in (f), is fully connected. The union multigraph $G$ (not shown here) is also connected: it has - possibly multiple - edges between the exact same pairs of nodes as $G'$. }
\label{fig:bernoulli-example}
\end{figure}

\subsection{Union Multigraph Sampling}

We can address the enumeration problem of the union graph~$G'$ by considering the \emph{union multigraph}; see Definition~2 and example illustrated in Fig. \ref{fig:multigraph}(e).
We employ a random walk that moves from one vertex to another by selection of random edges on the multigraph.
A naive implementation of such a random walk still requires the enumeration of all neighbors of the current node~$v$.  Instead, we propose to use the following two-stage neighbor-selection procedure described in Algorithm~\ref{alg:mixture} and depicted in Fig.\ref{fig:multigraph}(f), which requires enumeration of $v$'s neighborhood for only a single graph.

Denote by $d_i(v)$ the degree of node $v$ in graph~$G_i$, and by  $d(v)=\sum_{i=1}^{Q} d_i(v)$ its total degree in the union multigraph.
First, we select a graph~$G_i$ with probability $\frac{d_i(v)}{d(v)}$.
Second, we pick uniformly at random an edge of $v$ within the selected~$G_i$ (\ie with prob. $\frac{1}{d_i(v)}$), and we follow this edge to $v$'s neighbor.
This procedure is equivalent to selecting an edge of $v$ uniformly at random in the union multigraph, because  $\frac{d_i(v)}{d(v)} \cdot \frac{1}{d_i(v)} = \frac{1}{d(v)}$.

Note that in Step~3, Algorithm~\ref{alg:mixture} requires only the values of degrees~$d_i(v)$ of all relation graphs.
Only in Step~4 of Algorithm~\ref{alg:mixture} does one enumerate all neighboring edges in the selected~$G_i$.
Because the degree information~$d_i(v)$ is usually much cheaper to obtain (\eg via simple low-bandwidth API calls)
than enumerating all~$d_i(v)$ edges, Algorithm~\ref{alg:mixture} has the potential to save much bandwidth compared to the union simple graph sampling (which enumerates all neighboring edges in all relations).
This benefit is amplified when higher numbers of relations ($Q$) are used.
Algorithm~\ref{alg:mixture} may also be helpful in certain offline applications involving surveys and human respondents (\eg RDS \cite{Heckathorn02_RDS}), in which selection of random neighbors is possible but enumeration is not.

\begin{algorithm}[t]
\caption{Multigraph Sampling Algorithm}
\begin{algorithmic}[1]
\REQUIRE {$v_0\in V$, simple graphs $G_i, i=1\ldots Q$}
\STATE {Initialize $v \leftarrow v_0$.}
\WHILE{\textsc{not Converged}}
\STATE {Select graph $G_i$ with probability $\frac{d_i(v)}{d(v)}$}
\STATE {Select uniformly at random a neighbor $v'$ of $v$ in $G_i$}
\STATE {$v \leftarrow v'$}
\ENDWHILE
\RETURN{all sampled nodes $v$ and their degrees $d(v)$.}
\end{algorithmic}
\label{alg:mixture}
\end{algorithm}

Algorithm~\ref{alg:mixture}
leads to the following equilibrium distribution:

\smallskip

\begin{proposition}
If $G$ is connected and contains at least one triangle, then Algorithm~\ref{alg:mixture} leads to equilibrium distribution $\pi(v)=\frac{d(v)}{\sum_{u\in V} d(u)}$.
\label{lemma:equilibrium}
\end{proposition}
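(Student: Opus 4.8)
The plan is to recognize that Algorithm~\ref{alg:mixture}, despite its two-stage neighbor selection, is nothing more than a simple random walk on the union multigraph $G$. The excerpt already establishes that the two-stage procedure selects each edge incident to the current node $v$ with probability exactly $\frac{1}{d(v)}$, since $\frac{d_i(v)}{d(v)}\cdot\frac{1}{d_i(v)}=\frac{1}{d(v)}$. Hence the induced Markov chain on $V$ has transition probabilities $P(u,v)=\frac{m(u,v)}{d(u)}$, where $m(u,v)$ denotes the number of (multi)edges between $u$ and $v$ in $G$ and $d(u)=\sum_{w} m(u,w)$ is the total degree. The goal then reduces to showing that this chain converges to the claimed $\pi$.

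First I would verify that $\pi(v)=\frac{d(v)}{\sum_{u\in V} d(u)}$ is stationary, which is cleanest to do via the detailed-balance (reversibility) condition $\pi(u)P(u,v)=\pi(v)P(v,u)$. Both sides evaluate to $\frac{m(u,v)}{\sum_{w} d(w)}$ after cancellation, and since the edges are undirected we have $m(u,v)=m(v,u)$, so detailed balance holds for every ordered pair $(u,v)$. Summing over $u$ then yields $\pi P=\pi$, establishing that $\pi$ is a stationary distribution.

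Next I would invoke the fundamental convergence theorem for finite Markov chains, which requires irreducibility and aperiodicity to guarantee that $\pi$ is the unique equilibrium and that the walk converges to it from any starting vertex. Irreducibility follows immediately from the hypothesis that $G$ is connected: any node is reachable from any other along a path of positive-probability transitions.

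The main obstacle---and the reason the hypothesis that $G$ ``contains at least one triangle'' appears---is aperiodicity. A simple random walk on a connected graph fails to be aperiodic precisely when the graph is bipartite, in which case it has period~$2$ and oscillates between the two sides rather than converging to $\pi$. The triangle supplies an odd (length-$3$) cycle, so $G$ is non-bipartite, which together with connectivity forces the period to be~$1$. I would make this rigorous by noting that the walk can return to a vertex of the triangle in both $2$ steps (out-and-back along one edge) and $3$ steps (around the triangle), and $\gcd(2,3)=1$; connectivity then propagates aperiodicity to every vertex. With irreducibility and aperiodicity in hand, the stationary $\pi$ computed above is the unique limiting distribution, which completes the proof.
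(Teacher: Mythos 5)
Your proposal is correct and takes essentially the same approach as the paper: both identify Algorithm~\ref{alg:mixture} as a simple random walk on the union multigraph with transition probabilities $P_{vu}=d(v,u)/d(v)$, obtain irreducibility from connectivity, aperiodicity from the triangle, and conclude that the degree-proportional distribution is the unique equilibrium. The only difference is cosmetic: where the paper invokes the known equilibrium result for random walks on weighted graphs (viewing edge multiplicities as weights, per \cite[Example 4.32]{RossBook}), you verify stationarity directly via the detailed-balance computation $\pi(u)P(u,v)=m(u,v)/\sum_{w}d(w)=\pi(v)P(v,u)$, which is simply a self-contained proof of that cited fact.
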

\begin{proof}
Let $d(v,u)=d(u,v)$ be the number of edges between nodes $v$ and $u$ in the union multigraph $G$.
The sampling process of Algorithm~\ref{alg:mixture} is a Markov chain on $V$ with transition probabilities $P_{vu}=\frac{d(v,u)}{d(v)}, u,v \in V$.  So long as $G$ is finite and connected,
this random walk is irreducible and positive recurrent. The presence of a triangle within $G$ further guarantees aperiodicity.

A Markov chain of this type is equivalent to a random walk on an undirected weighted graph with edge weights $w(v,u)=d(v,u)$.
A random walk on weighted graph is known to have the unique equilibrium distribution
$\pi(v)=\frac{w(v)}{\sum_{u\in V} w(u)}$, where  $w(v)=\sum_{u} w(v,u)$
(\eg see \cite[Example 4.32]{RossBook}), and the proof follows immediately by substitution.
\end{proof}

\subsection{Practical Issues}

Various practical issues need to be addressed when implementing these ideas in practice. For completeness, we briefly repeat some good practices here, and we refer the interested reader
to our parallel work \cite{Gjoka2010, Gjoka2011_Facebook_JSAC}.

\subsubsection{Choice of crawling technique}

There are many ways to crawl a multigraph, \eg by using various random walks or graph traversal techniques. In \cite{Gjoka2011_Facebook_JSAC}, we showed that a simple random walk with correction for unequal sample weights, also called Re-Weighted Random Walk (RWRW),
is more efficient than competitors such as Metropolis-Hastings Random Walks. %
Therefore, throughout this paper, we employ RWRW described below.

\subsubsection{Re-Weighted Random Walk}\label{subsec:Re-Weighting the Random Walk}

Much like a classic random walk on a simple graph,
a random walk on multigraph~$G$ is inherently biased towards high-degree nodes. Indeed, per Proposition~\ref{lemma:equilibrium}, the probability of sampling a node~$v$ is proportional to its degree~$d(v)$ in~$G$.
\cite{Heckathorn97_RDS_introduction,Salganik2004,Newman01_EgoCentered_Networks} show how to apply the Hansen-Hurwitz estimator \cite{HansenHurwitz1943} to correct for this bias.
Let~$x(v)$ be an arbitrary function defined on graph nodes~$V$, with mean
$\bar{x} = \frac{1}{|V|}\sum_{v \in V} x(v).$
Then
\begin{equation}
	\hat{\bar{x}}\ =\ \frac{\sum_{v \in S} x(v)/d(v)}{\sum_{v \in S} 1/d(v)}
\end{equation}
is an unbiased and consistent estimator of~$\bar{x}$.
By default, we use this reweighting procedure throughput the paper (referring to the combination of random walks with post-hoc reweighting as the RWRW method).

\subsubsection{\label{sec:mg_diagnostics}Multiple Walks and Convergence Diagnostics}

In previous work \cite{Gjoka2010, Gjoka2011_Facebook_JSAC}, we recommended the use of multiple, simultaneous random walks to reduce the chance of obtaining samples that overweight non-representative regions of the graph.  We also recommended the use of formal convergence diagnostics to assess sample quality in an online fashion, which help to determine when a set of walks is in approximate equilibrium, and hence when it is safe to stop sampling.  Use of both multiple walks and convergence diagnostics are critical to effective sampling of OSNs, as our sample case (Section~\ref{sec:mg_lastfm}) illustrates.

In this paper, we use three convergence diagnostics, following \cite{Gjoka2010, Gjoka2011_Facebook_JSAC}.  First, we track the running means for various scalar parameters of interest as a function of the number of iterations.  Second, we use the Geweke \cite{geweke} diagnostic within each random walk, which verifies that mean values for scalar parameters at the beginning of the walk (here the first 10\% of samples) does not differ significantly from the corresponding mean at the end of the walk (here the last 50\%). Third, we use the Gelman-Rubin \cite{gelman-rubin} diagnostic to verify convergence across walks, by ensuring that the parameter variance between walks matches the variance within walks. %

\section{Evaluation in Synthetic Graphs\label{sec:mg_synthetic}}

In this section, we use synthetic graphs to demonstrate two key benefits of the multigraph approach, namely (i) improved connectivity of the union multigraph, even when the underlying individual graphs are disconnected, and (ii) improved mixing time, even when the individual graphs are highly clustered.  The former is necessary for the random walk to converge. The latter determines the speed of convergence.

{\bf Erdos-Renyi graphs.}
In Example~\ref{ex:bernoulli-example} and Fig.~\ref{fig:bernoulli-example}, we noted that even sparse, highly fragmented graphs can have well-connected unions. In Fig.~\ref{fig:ER_graphs}, we generalize this example and quantify the benefit of the multigraph approach. We consider here a collection
$G_{1}, .. ,G_{Q}$
of $Q$ Erdos-Renyi random graphs $(N,p)$ with $N\eq1000$ nodes and $p\eq 1/1000$, i.e., with the expected number of edges $|E|=500$ each. We then look at properties of their multigraph $G$ with increasing numbers of simple graphs $Q$.

\begin{figure}[t!]
\centering
\includegraphics[width=0.49\textwidth]{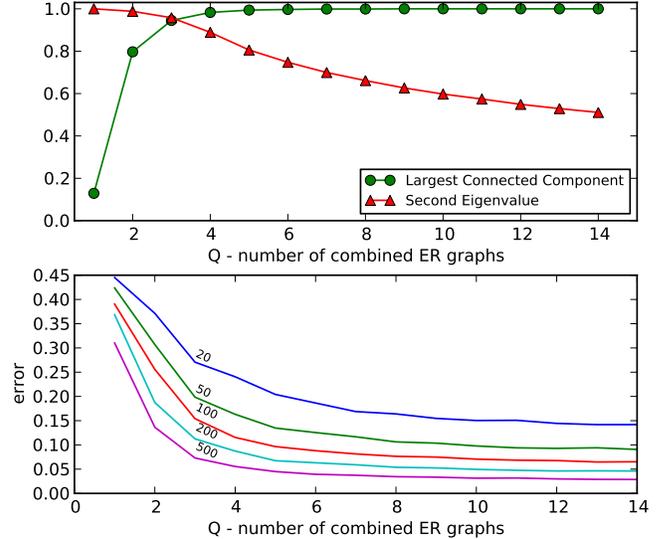}
\caption{Multigraph that combines from several Erdos-Renyi graphs. We generate a collection $G_{1}, .. ,G_{Q}$ of $Q$ random Erdos-Renyi (ER) graphs with $|V|=1000$ nodes and expected $|E|=500$ edges each.
\qquad (top)~We show two properties of multigraph $G$ as a function of $Q$. (1) Largest Connected Component (LCC) fraction ($f_{LCC}$) is the fraction of nodes that belong to the largest connected component in~$G$. (2) The second eigenvalue of the transition matrix of random walk on the LCC is related to the mixing time.
\qquad (bottom)~We also label a fraction $f=0.5$ of nodes within LCC and run random walks of lengths 20\ldots 500 to estimate $f$. We show the estimation error (measured in the standard deviation) as a function of $Q$ (x axis) and walk length (different curves).
}
\label{fig:ER_graphs}
\end{figure}

In order to characterize the connectivity of $G$, we define $f_{LCC}$ as the fraction of nodes that belong to the largest connected component in~$G$. For $Q\eq 1$ we have $f_{LCC}\simeq 0.15$, which means that each simple ER graph is heavily fragmented. Indeed, at least $999$ edges are necessary for connectivity. However, as $Q$ increases, $f_{LCC}$ increases.
With a relatively small number of simple graphs, say for $Q=6$, we get $f_{LCC}\simeq1$, which means that the multigraph is fully connected with high probability. In other words, combining several simple graphs into a multigraph allows us to reach (and sample) many nodes otherwise unreachable.

Note that this example illustrates a more general phenomenon.  Given $Q$ independent random graphs with $N$ nodes each and with expected densities $p_1,\ldots,p_Q$, the probability that an edge $\{u,v\}$ belongs to their union graph $G'$ is $p^*=1-\prod_{i=1}^Q(1-p_i)$.  For $p_i$ approximately equal, this approaches~1 exponentially fast in $Q$.  Asymptotically, the union graph will be almost surely connected where $(N-1) p^*> \ln N$ \cite[pp413--417]{West96}, in which case the union multigraph is also trivially connected.  Thus, intuitively, a relatively small number of sparse graphs are needed for the union to exceed its connectivity threshold.

In order to characterize the mixing time, we plot the second eigenvalue  $\lambda_2$ of the transition matrix of the random walk (on the LCC).  $\lambda_2$ is well-known to relate to the mixing time of the associated Markov chain \cite{Boyd2004_mixing}: the smaller $\lambda_2$, the faster the convergence. In \fig{fig:ER_graphs}(top), we observe that $\lambda_2$ significantly drops with growing~$Q$.
(However, note that adding a new edge to an existing graph does not always guarantee the decrease of $\lambda_2$. It is possible to design examples where $\lambda_2$  increases, although they are rare.)

To further illustrate the connection between $\lambda_2$ and the speed of convergence, we conducted an experiment with a simple practical goal: apply random walk to estimate the size of an exogenously defined ``community'' in the network. We labeled as ``community members'' a fraction $f=0.5$ of nodes within LCC (these nodes were selected with the help of a randomly initiated BFS to better imitate a community).
Next, we ran 100 random walks of lengths 20\ldots 500 within this LCC, and we used them to estimate $f$.
In \fig{fig:ER_graphs} (bottom), we show the standard error of this estimator, as a function of $Q$ (x axis) and walk length (different curves).
This error decreases not only with the walk length, but also with the number $Q$ of combined graphs. This means that by using the multigraph sampling approach we improve the quality of our estimates. Alternatively, we may think of it as a way to decrease the sampling cost. For example, in \fig{fig:ER_graphs}, a random walk of length 500 for $Q\eq 3$ (\ie when LCC is already close to 1) is equivalent to a walk of length 100 for $Q\simeq 8$, which results in a five-fold reduction of the sampling cost.

{\bf ER Graph Plus Random Cliques.}
One may argue that ER graphs are not good models for capturing real-life relations. Indeed, in practice, many relations are highly clustered; \eg a friend of my friend is likely to be my friend. In an extreme case, all members of some community may form a clique. This is quite common in OSNs, where we are often able to browse all members of a group, or all participants of an event.

Interestingly, the multigraph technique is efficient also under the presence of cliques.
 In Fig.~\ref{fig:ER_and_CC}(a), we consider one ER graph, combined with an increasing number of random cliques. 
 We plot the same three metrics as in Fig~\ref{fig:ER_graphs}, and we obtain qualitatively similar results.
This robustness is a benefit of the multigraph approach.

\begin{figure}[ht] 
\centering
\subfigure[Combination of one ER graph ($|V|=200$ nodes and $|E|=100$ edges) with a set of $k-1$  cliques (of size 40 randomly chosen nodes each).]{
\includegraphics[scale=0.5]{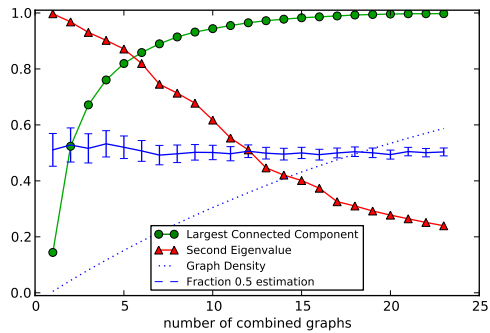}
}
\subfigure[Combination of multiple regular random graphs with clustering~\cite{Newman09_RG_with_clustering}. We set the parameters such that each of $|V|=1000$ nodes has degree equal to~2, and each edge participates in exactly one triangle.]{
\includegraphics[scale=0.46]{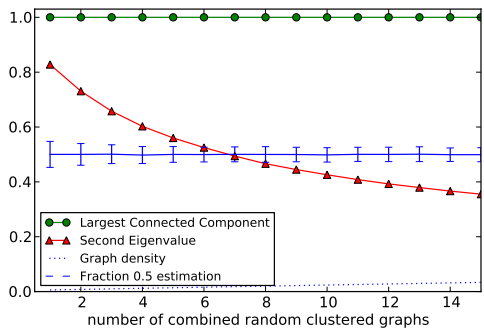}
}
\caption[Optional caption for list of figures]{Multigraphs resulting from a combination of various graphs.}
\label{fig:ER_and_CC}
\end{figure}

{\bf Random Graphs with Clustering.} 
Finally, in Fig~\ref{fig:ER_and_CC}(b), we consider a combination of random graphs with clustering~\cite{Newman09_RG_with_clustering}. 
The results confirm our previous observations. 

\section{Multigraph Sampling of Last.fm\label{sec:mg_lastfm}}

In this section, we apply multigraph sampling to \lastfm - a music-oriented OSN that allows users to create communities of interest that include both listeners and artists.  \lastfm is built around an Internet radio service that compiles a preference profile for each listener and recommends users with similar tastes.  In June 2010, \lastfm was reported to have around 30 million users and was ranked in the top 400 websites in Alexa.
We chose \lastfm to demonstrate our approach because it provides an example of a popular OSN that is fragmented with respect to the social graph (referred to on the site as ``friendship'') as well as other relations.
For example, many \lastfm users mainly listen to music and do not use the social networking features, which makes it difficult to reach them through crawling the friendship graph; likewise, users with similar music tastes may form clusters that are disconnected from other users with very similar music tastes. This intuition was confirmed by our empirical observations.  Despite these challenges, we show that multigraph sampling is able to obtain a fairly representative sample in this case, while single graph sampling on any specific relation fails.

\subsection{Crawling \lastfm}

We sample \lastfm via random walks on several individual relations as well as on their union multigraph.
Fig~\ref{fig:collected_userinfo} shows the information collected for each sampled user.

\begin{figure}[t]
\centering
\includegraphics[scale=0.40, angle=0]{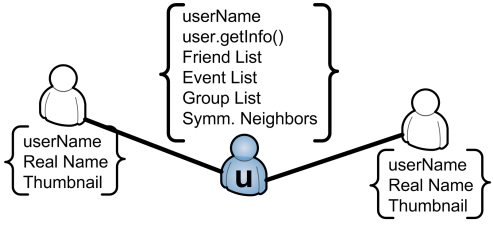}
\caption{Information collected for a sampled user $u$.
 (a) userName and user.getInfo: Each user is uniquely identified by her userName. The API call user.getInfo returns : real Name, userID, country, age, gender, subscriber, playcount, number of playlists, bootstrap, thumbnail, and user registration time.
(b) Friends list: List of mutually declared friendships.
(c) Event list. List of past and future events that the user indicates she will attend. We store the eventID and number of attendees.
(d) Group list. List of groups of which the user is a member. We store the group name and group size.
(e) Symmetric neighbors. List of mutual neighbors.
}
\label{fig:collected_userinfo}
\end{figure}

\subsubsection{Walking on Relations}

We consider the following relations between two users:
\begin{itemize}
\item { \textit{Friends:}} This refers to mutually declared friendship between two users.
\item { \textit{Groups:}} Users with something in common are allowed to start a group. Membership in the same group connects all involved users.
\item { \textit{Events:}} \lastfm allows users to post information on concerts or festivals. Attendees can declare their intention to participate. Attendance in the same event connects all involved users.
\item { \textit{Neighbors:}} \lastfm matches each user with up to 50 similar neighbors based on common activity, membership, and taste. The details of neighbor selection are proprietary.  We symmetrize this directed relation by considering only mutual neighbors as adjacent.
\end{itemize}

First, we collect a sample of users by a random walk on the graph for each individual relation, that is Friends, Groups, Events, and Neighbors.  Then, we consider sets of relations (namely: Friends-Events, Friends-Events-Groups, and Friends-Events-Groups-Neighbors) and we perform a random walk on the corresponding union multigraph. In the rest of the section, we refer to random walks on different simple graphs or multigraphs as {\em  crawl types}.

\begin{table*}[t!]
\footnotesize
\begin{center}
\begin{tabular}{|@{}l|@{}l|@{}l|@{}l|@{}l|@{}l|@{}l|@{}l|@{}l|}
\hline
Crawltype               & Friends      &  Events      & Groups       &  Neighbors    & Friends-Events & Friends-Events- &  Friends-Events-   &       UNI        \\
                        &              &              &              &               &                &     Groups      &  Groups-Neighbors  &               \\
\hline
\# Total Users          & 5$\times50K$ & 5$\times50K$ & 5$\times50K$ &  5$\times50K$ & 5$\times50K$  &  5$\times50K$   &    5$\times50K$    &         500K     \\
\% Unique users         &   71.0\%     &   58.5\%     &   74.3\%     & 53.1\%        &    59.4\%     &  75.5\%         &      75.6\%       &     99.1\%             \\
\# Users kept           &   245K       &   245K       &   245K        & 245K         &    200K       &  187K           &      200K         &     500K             \\
Crawling period         &  07/13-07/16 &  07/13-07/18 & 07/13-07/17  & 07/13-07/17   & 07/13-07/18   &  07/13-07/18    &     07/13-07/21   &    07/13-07/16   \\
\hline
Avg \# friends          &   10.7       &   18.0       &   15.8       &  12.2         &     9.8       &  6.8            &      6.6          &      1.2        \\
Avg \# groups           &   2.40       &    4.71      &   5.22        &   2.90         &    2.47       &  0.71           &     0.67          &     0.30             \\
Avg \# events           &  2.44/0.17   &    7.49/0.56 &   3.96/0.28  &  2.94/0.27    &   2.30/0.17   &  0.74/0.05      &    0.73/0.04      &   0.28/0.02                \\
 (past/future)          &              &              &              &               &               &                 &                   &                  \\
\hline
\end{tabular}
\end{center}
\vspace{-3pt}
\caption{Summary of collected datasets in July 2010. The percentage of users kept is determined from convergence diagnostics. Averages shown are after convergence and re-weighting which corrects sampling bias.}
\label{tab:datasets}
\vspace{-17pt}
\end{table*}

\subsubsection{Uniform Sample of userIDs (UNI)\label{sec:mg_uni}}

\lastfm usernames uniquely identify users in the API and HTML interface. However, internally, \lastfm associates each username with a userID,  presumably used to store  user information in the internal database.  We discovered that it is possible to obtain usernames from their userIDs, a fact that allowed us to obtain a uniform, ``ground truth'' sample of the user population.  Examination of registration and ID information indicates that \lastfm allocates userIDs in an increasing order. Fig. \ref{fig:userid_space} shows the exact registration date and the assigned userID for each sampled user in our crawls obtained through exploration.  With the exception of the first $\sim2M$ users (registered in the first 2 years of the service), for every $userID_1>userID_2$  we have $registration\_time(userID_1) > registration\_time(userID_2)$. We also believe that userIDs are assigned sequentially because we rarely observe non-existent userIDs  after the $\sim2,000,000$ threshold. We conjecture that the few non-existent userIDs after this threshold are closed or banned accounts. At the beginning of the crawl, we found no indication of user accounts  with IDs above $\sim31,200,000$. Just before the crawls, we registered new users that were assigned user IDs slightly higher than the latter value.

Using the userID mechanism, we obtained a reference sample of \lastfm users by uniform rejection sampling \cite{leon-garcia}.  Specifically, each user was sampled by repeatedly drawing uniform integers between $0$ and $35$ million (\ie the maximum observed ID plus a $\sim$4 million ``safety'' range) and querying the userID space.  Integers not corresponding to a valid userID were discarded, with the process being repeated until a match was obtained. IDs obtained in this way are uniformly sampled from the space of user accounts, irrespective of how IDs are actually allocated within the address space \cite{Gjoka2011_multigraph_JSAC}.  We employ this procedure to obtain a sample of 500K users, referred to here as ``UNI.''
We note that the same method has been recently used in~\cite{Gjoka2010}, as well as in~\cite{Rejaie2010}. The latter also examined population growth and active vs. inactive users, which are out of the scope of this paper.

\begin{figure}[t]
\centering
\includegraphics[scale=0.25, angle=0]{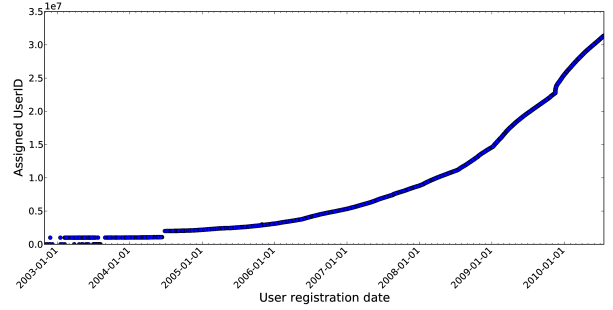}
\caption{\lastfm assigns userIDs in increasing order after 2005: userID vs registration time. }
\label{fig:userid_space}
\end{figure}

Although UNI sampling currently solves the problem of uniform node sampling in \lastfm and is a valuable asset for this study, it is not a general solution for sampling OSNs. Such an operation is not generally supported by OSNs.  Furthermore, the userID space must not be sparse for this operation to be efficient. In the \lastfm case, the small userID space makes this possible at the time of this writing; however, a simple increase of the userID space to 48 or 64 bits would render the technique infeasible.  In summary, we were able to obtain a uniform sampling of userIDs and use it as a baseline for evaluating the sampling methods of interest against the target distribution.  

\subsubsection{\label{sec:mg_population}Estimating Last.fm population size}
In addition to the UNI sample  presented in Table \ref{tab:datasets}, we  obtained a second UNI sample of the same size one week later. We then applied the capture-recapture method \cite{seber1982estimation} to estimate the Last.fm user population during the period of our crawling. According to this method, the population size is  estimated to be :
$$ P_{\lastfm} = \frac{N_{UNI1}\times N_{UNI2}}{R} = 28.5M,$$
where $N_{UNI1} = N_{UNI2} = 500K$ and $R$ is the number of valid common userIDs sampled during the first and second UNI samples. This estimation is consistent with our observations of the maximum userID space and close to the reported size of \lastfm on various Internet websites.  We will later use this second sample to comment on the topology change during our crawls.

\subsubsection{\label{sec:mg_timescale}Topology Change During Sampling Process}
While substantial change could in theory affect the estimation process, \lastfm evolves very little during the duration of our crawls. The increasing-order userID assignment allows us to infer the maximum growth of \lastfm during this period, which we estimate at $25K/$day on average.  Therefore, with a population increase of $0.09\%/$day, the user growth during our crawls (2-7 days) is calculated to range between $0.18\%-0.63\%$ per crawl type, which is quite small. Furthermore, the comparison between the two UNI samples revealed almost identical distributions for the properties studied here, as shown in Fig~\ref{fig:unicompare_mg}.
Therefore, in the rest of the paper, we assume that any changes in the \lastfm network during the crawling period can be ignored. This is unlike the context of dynamic graphs, where considering the dynamics is essential, \eg see \cite{Rasti2008, Willinger09-OSN_Research, acer2010random}.

\begin{figure*}[t]
\centering
\subfigure[Number of friends]{
\includegraphics[scale=0.38, angle=0]{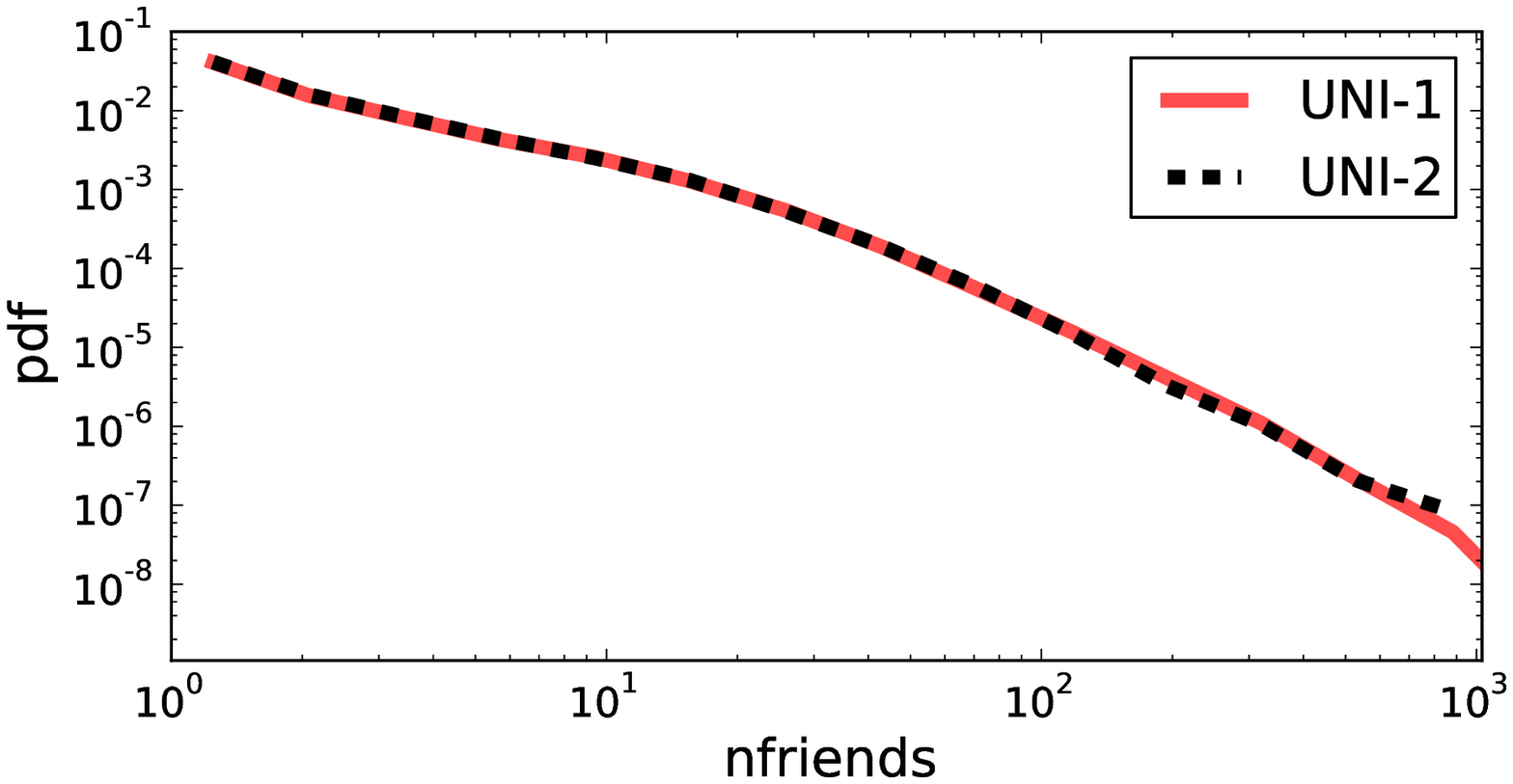}}
\subfigure[Number of groups]{
\includegraphics[scale=0.38, angle=0]{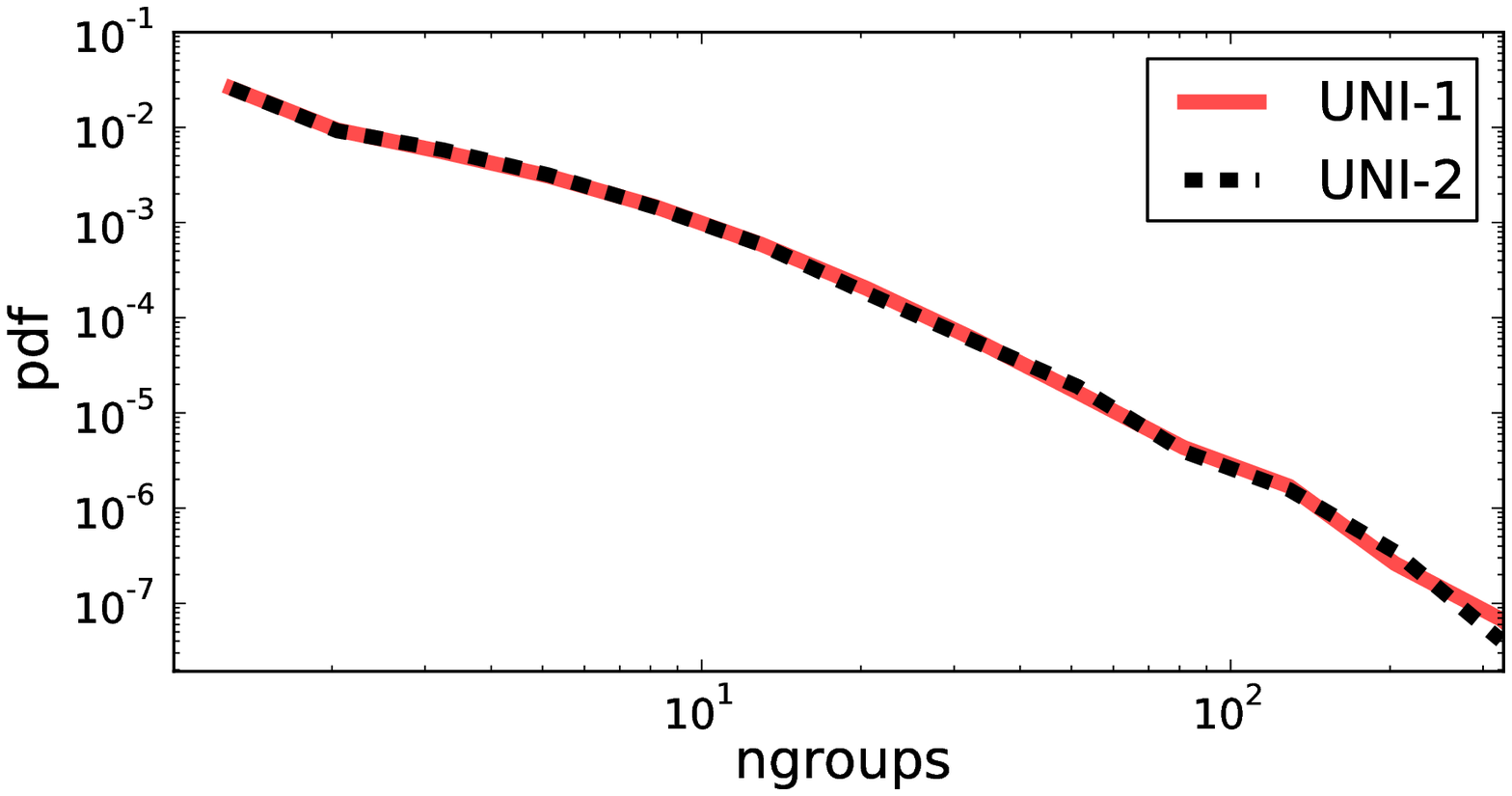}}
\subfigure[Number of past events ]{
\includegraphics[scale=0.38, angle=0]{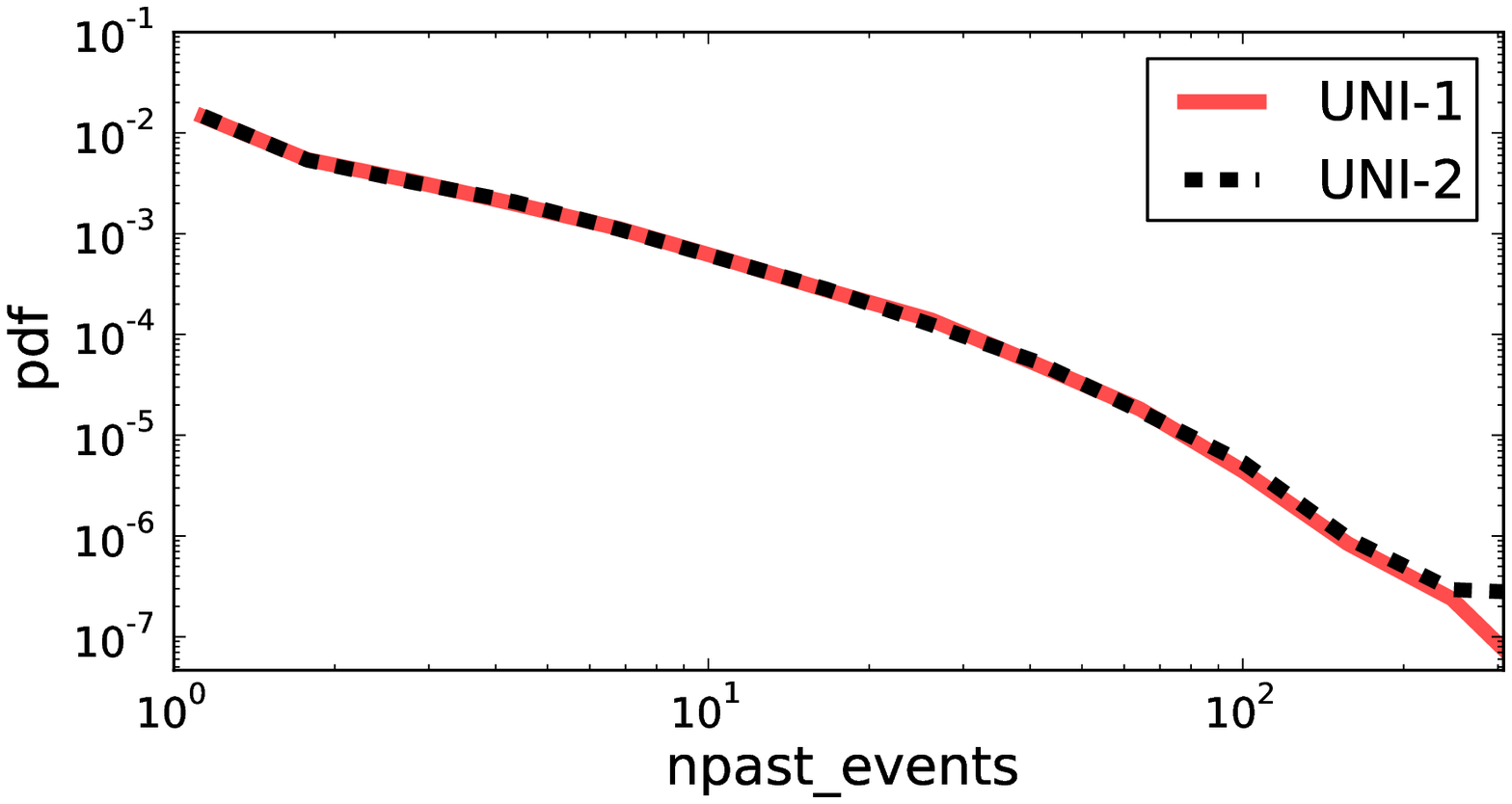}}
\subfigure[Number of future events ]{
\includegraphics[scale=0.38, angle=0]{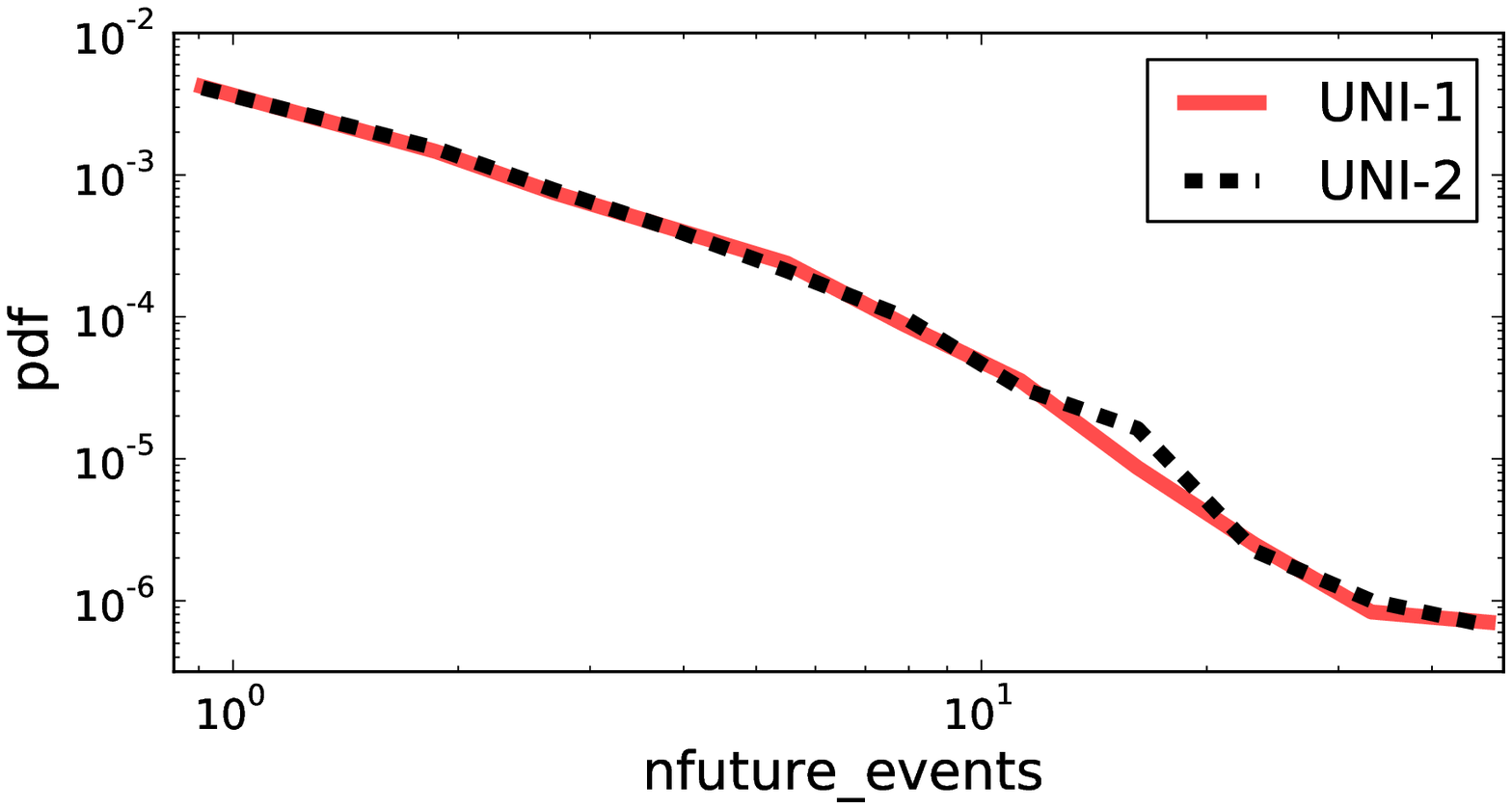}}
\caption{Probability distribution function (PDF) of UNI samples obtained one week apart from each other. Results are binned.}
\label{fig:unicompare_mg}
\end{figure*}

\subsubsection{Efficient Multigraph Sampling in \lastfm${}$}

To collect data from \lastfm, we use a combination of API calls and data scraping. Consider that we are sampling user $u$. For efficient implementation of multigraph sampling we proceed in two stages, as shown in Fig~\ref{fig:multigraph}(f). 

In the first stage, we discover the graphs of user $u$, and $u$'s degrees in them.
In our study, we use the API calls {\tt user.getfriends}, {\tt user.getneighbors}, {\tt user.getpastevents}, and {\tt user.getevents} to collect the list of friends, neighbors, past events, and future events respectively. Due to a lack of an API call that lists the groups of a user, we use data scraping to collect the list of groups and corresponding size for each group. We treat each individual group and event as a different graph in the multigraph. We also consider the set of friends and neighbors to comprise the friends and neighbors graph respectively in the multigraph.  At the end of the first stage, we select one of the graphs $G_i$ in accordance with Algorithm~\ref{alg:mixture} in Section~\ref{sec:mg_samplingmethodology}.

We should note that at the end of the first stage, we have not enumerated any user from any of the groups and events graphs. Each of these graphs is quite large (up to tens of thousands of users) and depending on the user, there are many groups or events per user (up to thousands).  On the other hand, we have enumerated users of friends and neighbors since knowledge of neighborhood size is equivalent to enumeration  \footnote{There might be workarounds to enumerating friends but they are not necessarily more efficient. For example, we could extract the number of friends by data scraping.  In general, in another setting we could do away with any kind of enumeration in the first stage.} for these graphs.  Overall, our two stage approach saves us bandwidth and time by avoiding the enumeration of users for graphs that we are not going to sample from at each iteration.

In the second stage, we pick uniformly at random one of the nodes from the graph $G_i$, selected at the end of the first stage. If the graph $G_i$ is a graph of a group or an event, we need to carefully implement this action to be efficient. More specifically, we do not need to enumerate all group members or event attendants from a group or event graph. Instead, we can take advantage of the pages functionality that OSNs often provide and only fetch the page that corresponds to the user selected uniformly at random. In our study, to fetch group members we use the API call {\tt group.getmembers}, which returns 50 users per page. To fetch event attendants we use data scraping\footnote{We prefer data scraping to the API call {\tt event.getattendees} because the API call i) is not paged ii)  does not return users that marked  ``maybe`` for the event iii) is very slow for large events.}, which also returns  50 users per HTML page.

\subsubsection{Data Collection}
We used a cluster of machines to execute all crawl types under comparison simultaneously.  For each crawl type, we run $|V_0|=5$ different independent walks. The starting points for the five walks, in each crawl type, are randomly selected users identified by the web site as listeners of songs from each of five different music genres: country, hip hop, jazz, pop and rock.  This set was chosen to provide an overdispersed seed set, while not relying on any special-purpose methods (\eg UNI sampling).  To ensure that differences in outcomes do not result from choice of seeds, the same seed users are used for all crawl types.  We let each independent crawl continue until we determine convergence per walk and per crawl, using online diagnostics as introduced in \cite{Gjoka2010} and described in Section \ref{sec:mg_diagnostics}. Eventually, we collected  exactly 50K samples for each random walk crawl type.
 Finally, we collect a UNI sample of 500K users.

\begin{table}[t!]
\begin{center}
\begin{tabular}{@{}l@{}||@{\,}l|@{\,}l|@{\,}l|@{\,}l}
Crawl Type                       & Friends  & Events  & Groups  & Neighbors  \\
                                 & Graph   & Graphs   & Graphs  &  Graphs    \\
\hline \hline
Friends                          & 100\%   & 0\%      & 0\%     &    0\%     \\
\hline
Events                           &   0\%   &  100\%   & 0\%     &    0\%     \\
\hline
Groups                           &   0\%   &   0\%    & 100\%   &    0\%     \\
\hline
Neighbors                        &   0\%   &   0\%    &   0\%   &  100\%     \\
\hline
Friends-Events                   & 2.2\%   &   97.8\% & 0\%     &    0\%     \\
\hline
Friends-Events-Groups            & 0.3\% &   5.4\%  & 94.3\%    &    0\%     \\
\hline
Friends-Events-Groups-Neighbors  & 0.3\% &   5.5\%  & 94.2\%    &   0.02\%   \\
\end{tabular}
\end{center}
\caption{Percentage of time a particular graph (edges corresponding to this graph) is used during the crawl by Algorithm~\ref{alg:mixture}}
\label{tab:percentage_multigraph}
\end{table}

\subsubsection{Summary of Collected Datasets}

Table \ref{tab:datasets} summarizes the collected datasets. Each crawl type contains $5 \times 50K=250K$ users.  We observe that there is a large number of repetitions in the random walks of each crawl type, ranging from 25\% (in Friends-Events-Groups-Neighbors) to 47\% (in Neighbors).  This appears to stem from the high levels of clustering observed in the individual networks. It is also interesting to note that the crawling on the multigraph Friends-Events-Groups-Neighbors is able to reach more unique nodes than any of the single graph crawls.

Table \ref{tab:percentage_multigraph} shows the fraction of Markov chain transitions using each individual relation. The results for the single-graph crawl types Friends, Events, Groups, and Neighbors are as expected: they use their own edges 100\% of the time and other relations' 0\%. Besides that, we see that Events relations  dominate Friends when they are combined in a multigraph, and Groups dominate Friends, Events, and Neighbors when combined with them.  This occurs because many groups and events are quite large (hundreds or thousands of users), leading participants to have very high relationship-specific degree for purposes of Algorithm~\ref{alg:mixture}, and thus for the Group or Event relations to be chosen more frequently than low-degree relations like Friends. In the crawl types obtained through a random walk, the highest overlap of users is observed between Groups and Friends-Events-Groups-Neighbors (66K) while the lowest is between Neighbors and Friends-Events (5K). It is noteworthy that despite the dominance of Groups and the high overlap between Groups and Friends-Events-Groups-Neighbors, the aggregates for these two crawl types in Table \ref{tab:datasets} lead to very different samples of users.

\begin{figure}[t]
\centering
\includegraphics[scale=0.35, angle=270]{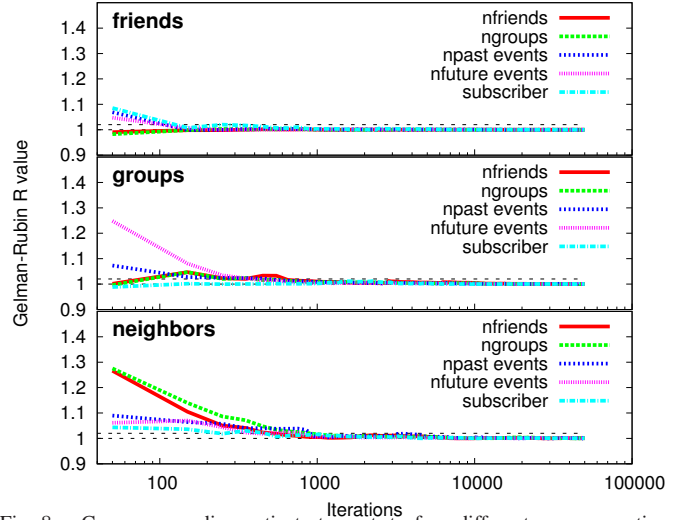}
\caption{Convergence diagnostic tests w.r.t. to four different user properties (``nfriends'': number of friends, ``ngroups'': number of groups, ``npast\_events'': number of past events, ``nfuture\_events'': number of future events, and ``subscriber'') and three different crawl types (Friends, Groups, Neighbors).}
\label{fig:gelman_rubin_single}
\end{figure}

\subsection{Evaluation Results}

\subsubsection{Convergence}

{\em Burn-in.} To determine the burn-in for each crawl type in Table~\ref{tab:datasets}, we run the Geweke diagnostic separately on each of its 5 chains, and the Gelman-Rubin diagnostic across all 5 chains at once, for several different properties of interest. The Geweke diagnostic shows that first-order convergence is achieved within each walk after approximately 500 iterations at maximum. For the single relation crawl types (Friends, Events, Groups, and Neighbors), the Gelman-Rubin  diagnostic indicates that convergence is attained within 1000 iterations per walk  (target value for R below 1.02 and close to 1) as shown in Fig.~\ref{fig:gelman_rubin_single}.

On the other hand,  multigraph crawl types take longer to reach equilibrium. Fig.~\ref{fig:gelman_rubin_multi} presents the Gelman-Rubin (R) score for three multigraph crawl types (namely Friends-Events, Friends-Events-Groups, Friends-Events-Groups-Neighbors) and five user properties (namely number of friends, number of groups, number of past/future events, and subscriber - a binary value which indicates whether the user is a paid customer).  We observe that it takes 10K, 12.5K and 10K samples for each crawl type correspondingly to converge. However, as we show next, they include more isolated users and better reflect the ground truth, while the single graph sampling methods fail to do so.  This underscores an important point regarding convergence diagnostics: while useful for determining whether a random walk sample approximates its equilibrium distribution, they cannot reliably identify cases in which the equilibrium itself is biased (\eg due to non-connectivity). For the rest of the analysis, we discard the number of samples each crawl type needed to converge.

\begin{figure}[t]
\centering
\includegraphics[scale=0.35, angle=270]{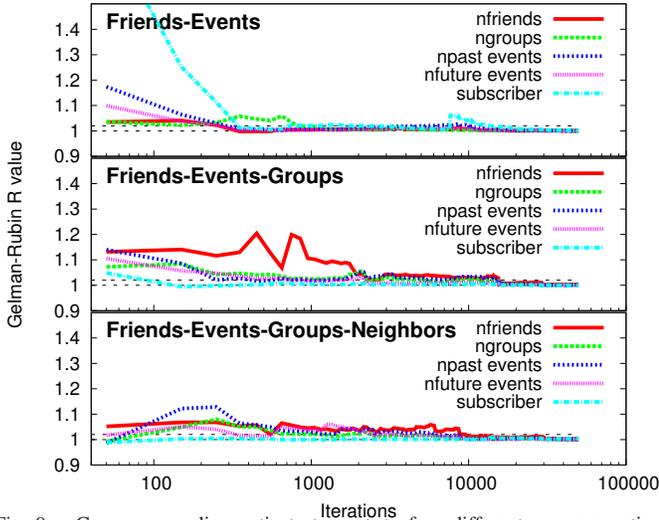}
\caption{Convergence diagnostic tests w.r.t. to four different user properties (``nfriends'': number of friends, ``ngroups'': number of groups, ``npast\_events'': number of past events, ``nfuture\_events'': number of future events, and ``subscriber'') and three different crawl types (Friends-Events, Friends-Events-Groups, Friends-Events-Groups-Neighbors).}
\label{fig:gelman_rubin_multi}
\end{figure}

{\em Total Running Time.} Before we analyze the collected datasets, we verify that the remaining walk samples, after discarding burn-in, have reached their stationary distribution. Table \ref{tab:datasets} contains the ``Number of users kept'' for each crawl type. We use the convergence diagnostics on the remaining samples to assess convergence formally. The results are qualitatively similar to the burn-in determination section. We also perform visual inspection of the running means  in Fig~\ref{fig:running_means} for four different properties, which reveals that the  estimation of the average for each property  stabilizes within 2-4k samples per walk (or 10k-20k over all 5 walks).

\subsubsection{Discovering Isolated Components}

As noted above, part of our motivation for sampling \lastfm using multigraph methods stems from its status as a fragmented network with a rich multigraph structure.  In particular, we expected that large parts of the user base would not be reachable from the largest connected component in any one graph.  Such users could consist of either isolated individuals or highly clustered sets of users lacking ties to rest of the network. We here call {\em isolate}, any user that has degree 0 in a particular graph relation. Walk-based sampling on that particular graph relation has no way of reaching those isolates, but a combination of graphs might be able to reach them, assuming that a typical user participates in different ways in the network (\eg a user with no friends may still belong to a group or attend an event).

\begin{table}[t!]
\begin{center}
\begin{tabular}{@{}l@{\,}||@{\,}l@{\,}|@{\,}l@{\,}|@{\,}l@{\,}|@{\,}l@{\,}}
Crawl Type                       &  Friends   &  Future       &  Past       & Groups \\
                                 &  Isolates  &   Events      &  Events     & Isolates\\ 
                                 &            &  Isolates     & Isolates     & \\
\hline \hline
Friends                          &   0\%      &   93.7\%      &    73.2\%   & 60.4\%  \\
\hline
Events                           & 19.2\%     &   78.2\%      &    4.5\%    &  41.7\%  \\
\hline
Groups                           & 21.2\%     &   89.9\%      &    62.0\%   & 0.0\% \\
\hline
Neighbors                        & 40.4\%     &   89.5\%      &    71.2\%   & 62.4\% \\
\hline
Friends-Events                   & 6.2\%     &   93.5\%      &    69.9\%    & 61.6\% \\
\hline
Friends-Events-Groups            & 5.5\%      &   98.15\%      &    88.1\%   & 85.3\% \\
\hline
Friends-Events-Groups-Neighbors  & 7.4\%      &   98.3\%      &    86.7\%   & 86.3\% \\
\hline
UNI                              & 87.9\%    &   99.2\%      &   96.1\%    &  93.8\% \\
\end{tabular}
\end{center}
\caption{Percentage of sampled nodes that are isolates (have degree 0) w.r.t. to a particular (multi)graph.}
\label{tab:percentage_isolates}
\end{table}

In Table \ref{tab:percentage_isolates}, we report the percentage of nodes in each crawl type that are estimated to be isolates, and compare this percentage to the UNI sample. Observe that there is an extremely high percentage of isolate users in any single graph: \eg  UNI samples are ~88\% isolates in the Friends relation,  96-99\% isolates in the Events relation, and 93.8\% isolates in the Groups relation. Such isolates are not necessarily inactive: for instance, 59\% of users without friends have either a positive playcount or playlist value, which means that they have played music (or recorded their offline playlists) in \lastfm, and hence are or have been active users of the site. This confirms our expectation that \lastfm is indeed a fragmented graph.

More importantly, Table \ref{tab:percentage_isolates} allows us to assess how well different crawl types estimate the \% of users that are isolates with respect to a particular relation or set of relations.  We observe that the multigraph that includes all relations (Friends-Events-Groups-Neighbors) leads to the best estimate of the ground truth (UNI sample - shown in the last row). The only exception is the friends isolates, where the single graph Neighbors gives a better estimate of the percentage of isolates over all other crawl types. The multigraph crawl type Friends-Events-Groups-Neighbors uses the Neighbors relation only 0.02\% of of the time, and thus does not benefit as much as might be expected (though see below).
A weighted random walk that put more emphasis on this relation (or use of a relation that is less sparse) could potentially improve performance in this respect. 

\begin{figure*}[t]
\centering
\subfigure[Number of friends]{\label{fig:mean_nfriends}
\includegraphics[scale=0.23, angle=0]{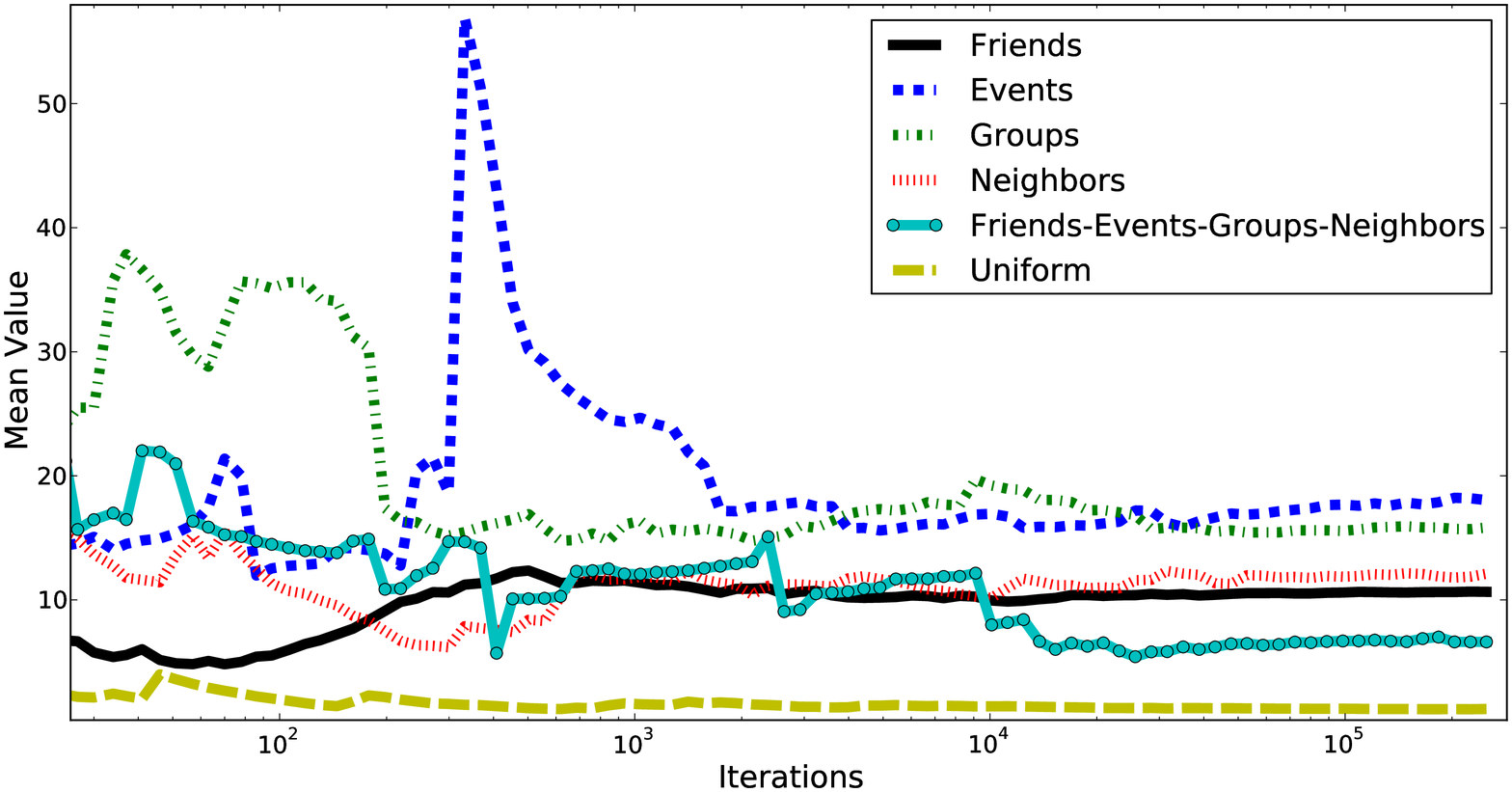}}
\subfigure[Number of groups]{\label{fig:mean_ngroups}
\includegraphics[scale=0.23, angle=0]{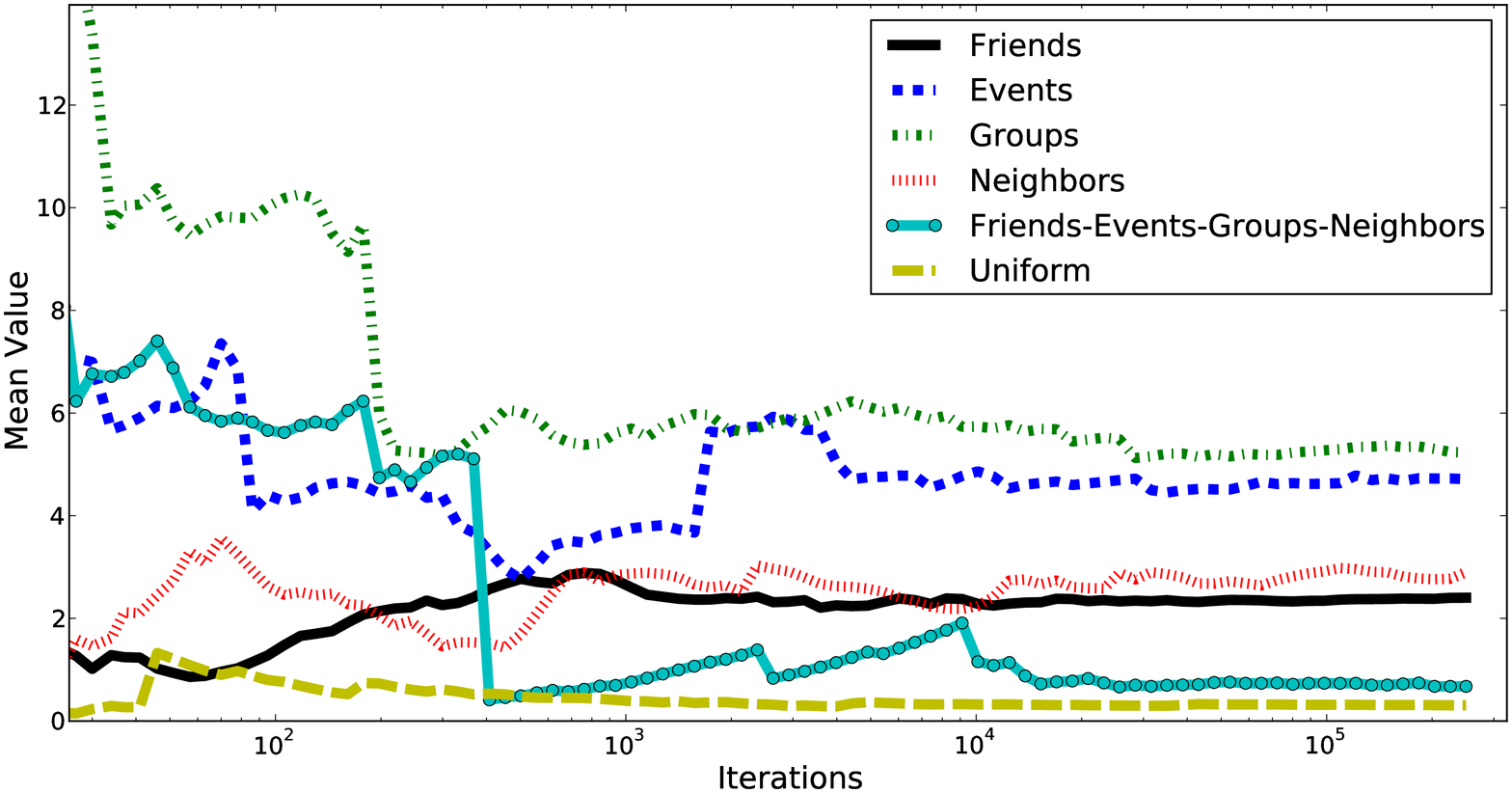}}
\subfigure[Number of past events]{\label{fig:mean_past_events}
\includegraphics[scale=0.23, angle=0]{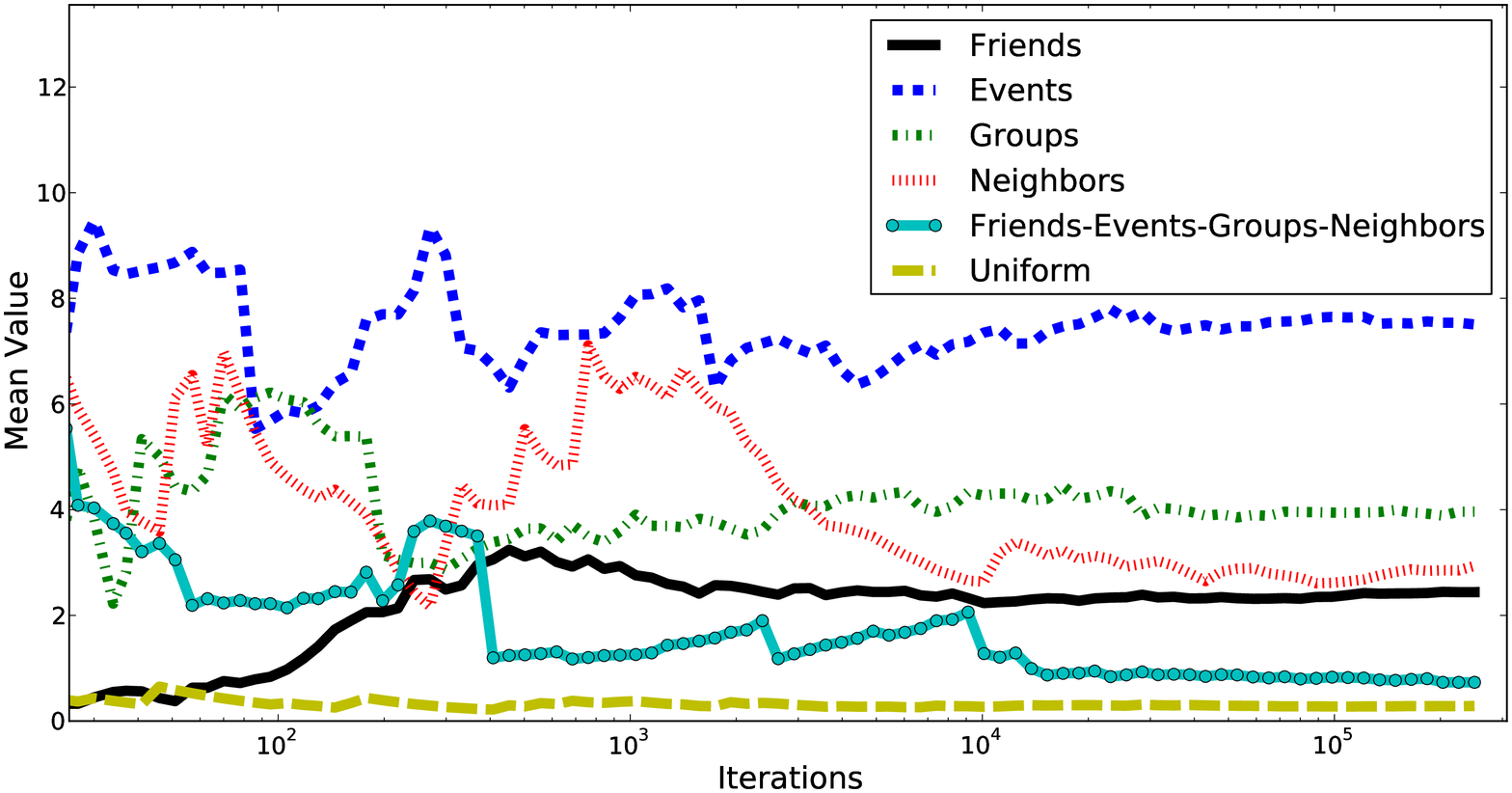}}
\subfigure[\% of Subscribers]{\label{fig:mean_subscriber}
\includegraphics[scale=0.23, angle=0]{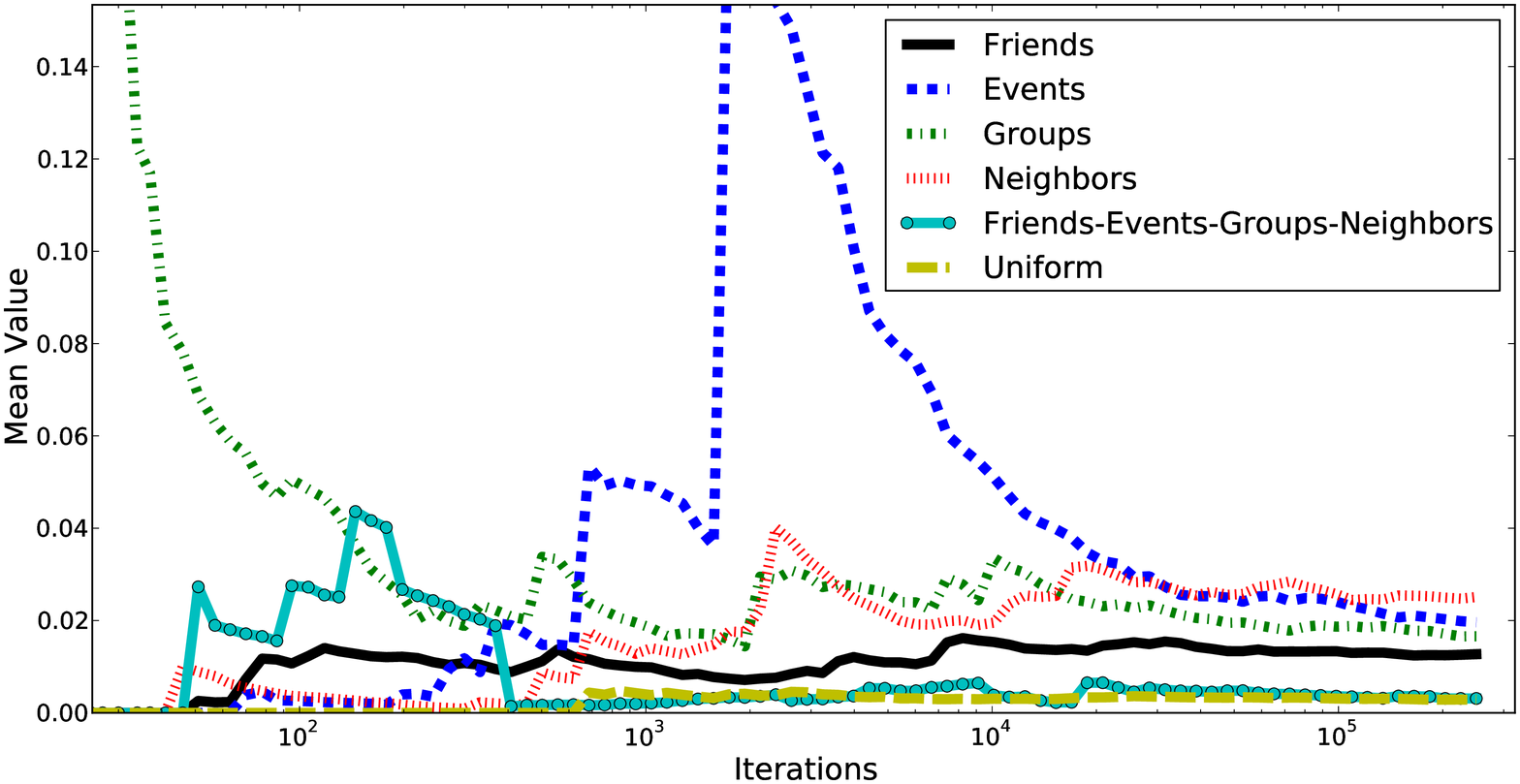}}
\caption{Single graph vs multigraph sampling. Sample mean over the number of iterations, for four user properties (number of friends, number of groups, number of past events, \% subscribers), as estimated by different crawl types.}
\label{fig:running_means}
\end{figure*}

\begin{figure*}[t]
\centering
\subfigure[Number of friends]{\label{fig:pdf_nfriends}
\includegraphics[scale=0.23, angle=0]{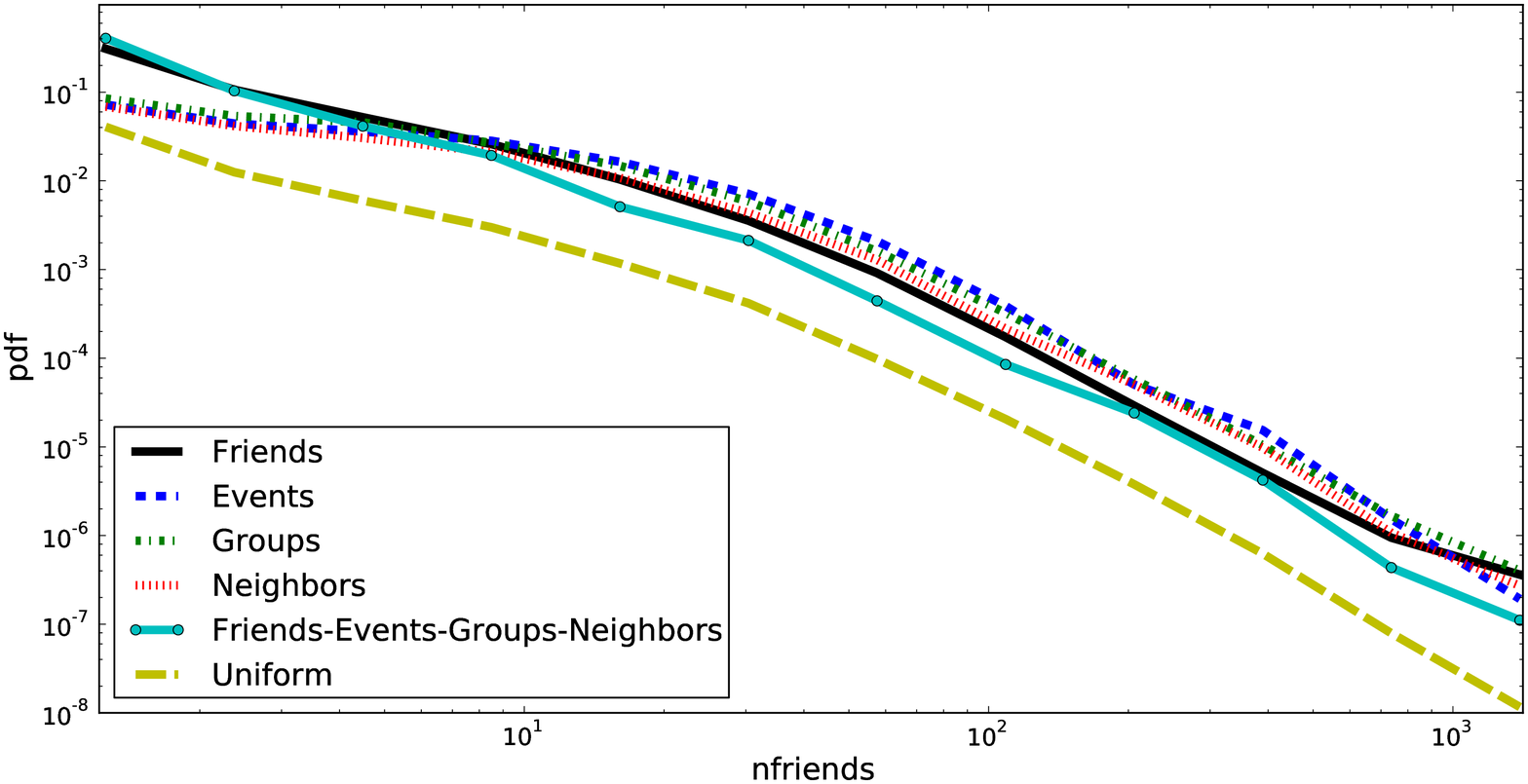}}
\subfigure[Number of groups]{\label{fig:pdf_ngroups}
\includegraphics[scale=0.23, angle=0]{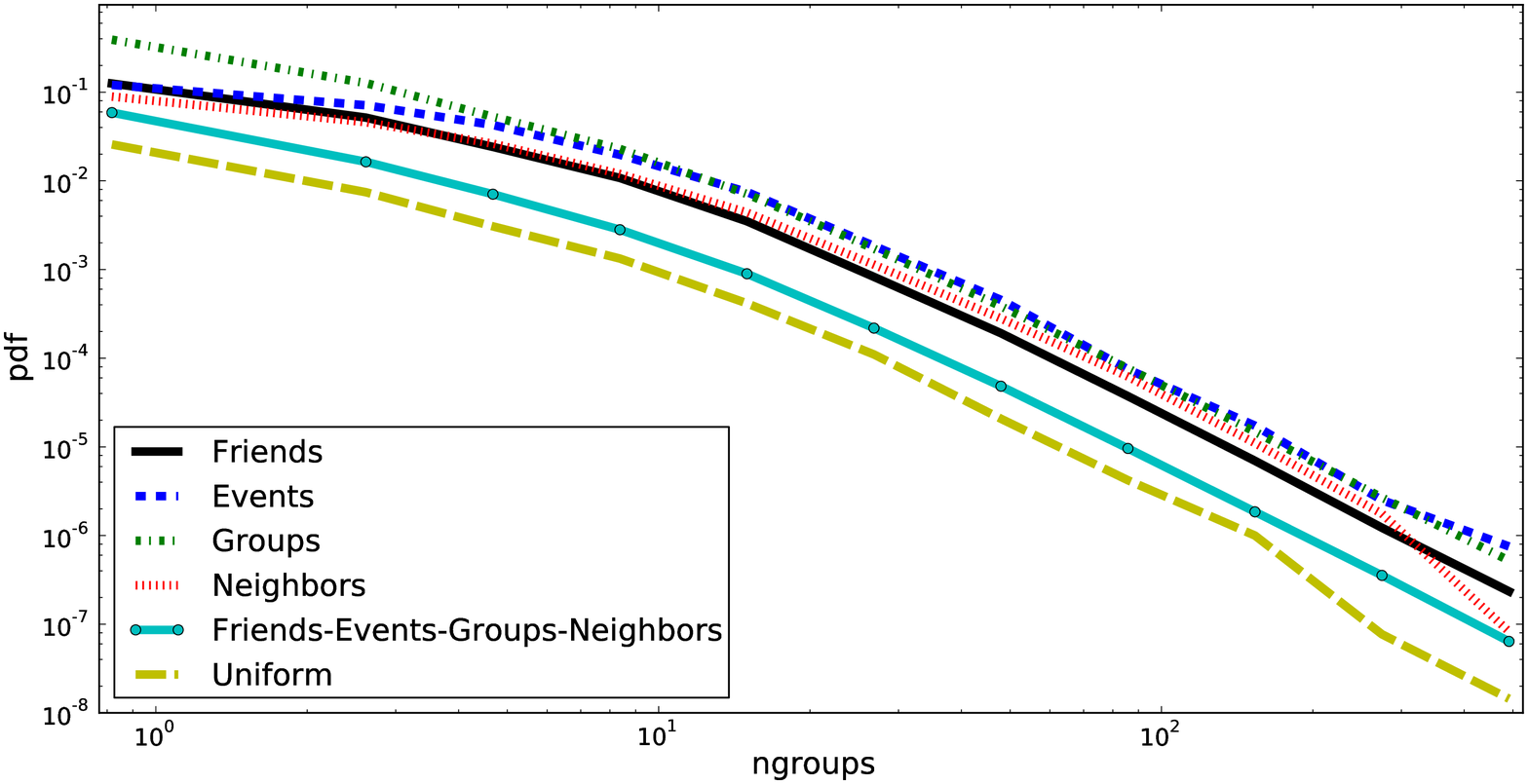}}
\subfigure[Number of past events ]{\label{fig:pdf_npast_events}
\includegraphics[scale=0.23, angle=0]{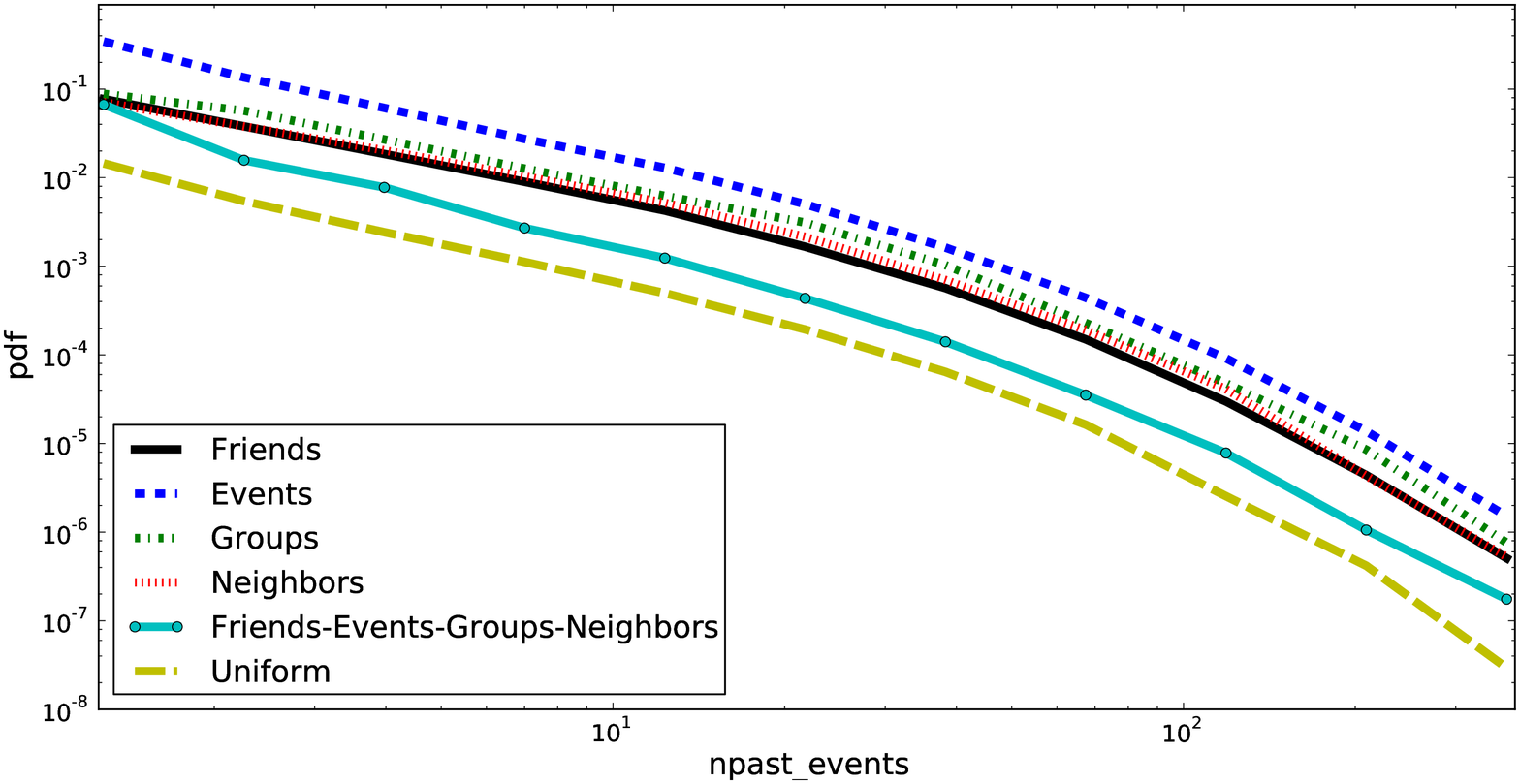}}
\subfigure[Number of future events ]{\label{fig:pdf_nfuture_events}
\includegraphics[scale=0.23, angle=0]{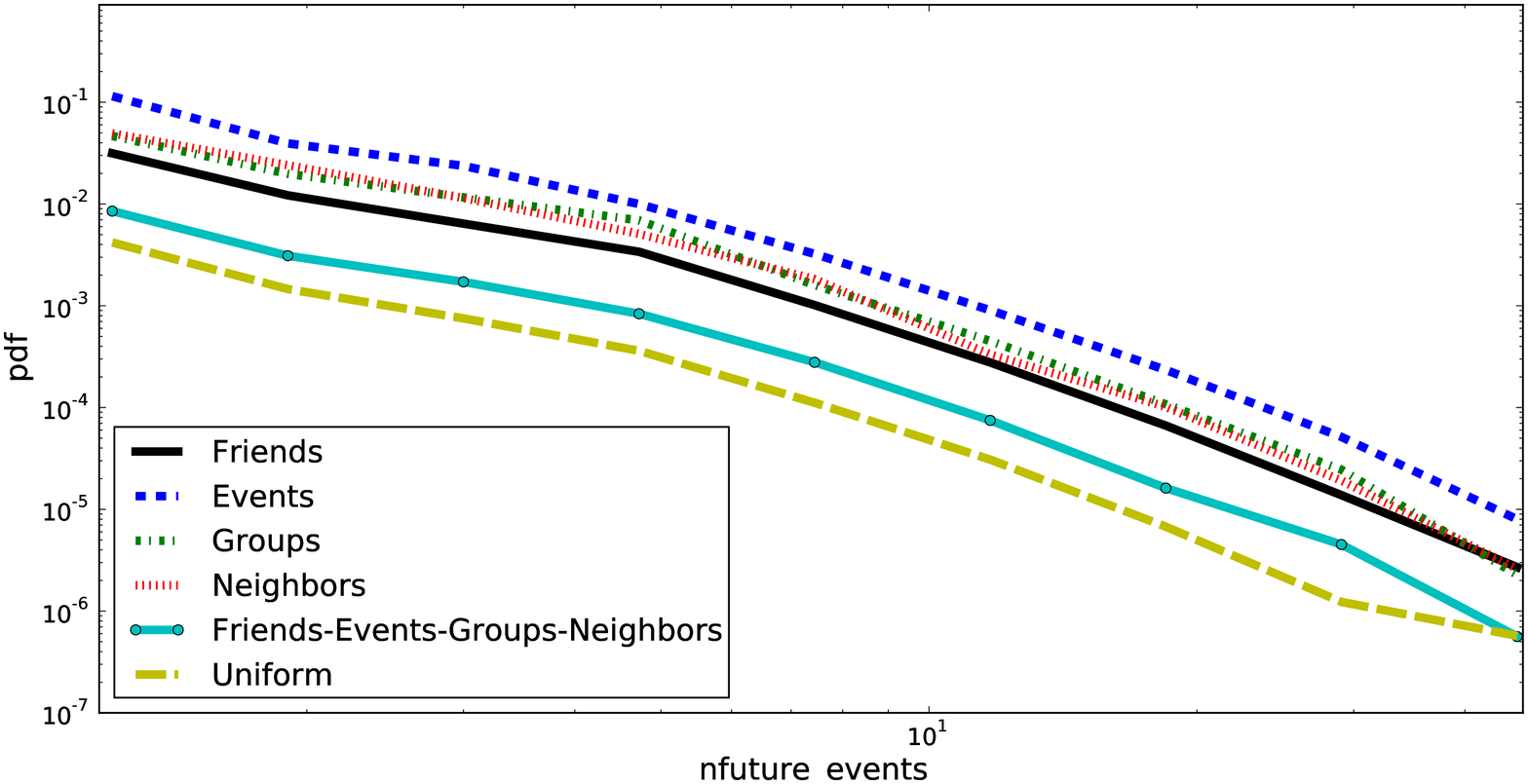}}
\caption{Single graph vs multigraph sampling. Probability distribution function (pdf) for  three  user properties (number of friends, number of groups, number of past events), as estimated by different crawl types.}
\label{fig:pdf_property}
\end{figure*}

\subsubsection{Comparing Samples to Ground Truth}
\label{sec:mg_comparegroundtruth}
~\\
{\em I. Comparing to UNI.}  In Table \ref{tab:percentage_isolates}, we saw that multigraph sampling was able to better approximate the percentage of isolates in the population.
 Here we consider other user properties (namely number of friends, past events, and groups a user belongs to, and whether he or she is a subscriber to \lastfm). In Fig \ref{fig:running_means}, we plot the sample mean value for four user properties across iteration number, for all crawl types and for the ground truth (UNI). One can see that crawling on a single graph, \ie Friends, Events, Groups, or Neighbors alone, leads to poor estimates.  This is prefigured by the previous results, as single graph crawls undersample individuals such as isolates on their corresponding relation, who form a large portion of the population. We also notice that Events and Groups alone consistently overestimate the averages, as these tend to cover the most active portion of the user base. However combining them together with other relations helps considerably. The multigraph that utilizes all relations, Friends-Events-Groups-Neighbors, is the closest to the truth. For example, it approximates very closely the avg number of groups and \% of paid subscribers (Figs \ref{fig:mean_ngroups}, \ref{fig:mean_subscriber}).

In Fig~\ref{fig:pdf_property}, we plot the probability distributions for four user properties of interest. Again, the crawl type Friends-Events-Groups-Neighbors is closest to the ground truth, in terms of shape of the distribution and vertical distance to it. Nevertheless, we observe that in both the probability distribution and the running mean plots, there is sometimes a gap from UNI, which is caused by the imperfect approximation of the \% of isolates. That is the reason that the gap is the largest for the number of friends property (Fig~\ref{fig:mean_nfriends}, \ref{fig:pdf_nfriends}).

{\em II. Comparing to Weekly Charts.} Finally, we compare the estimates obtained by different crawl types, to a different source of the ground truth - the weekly charts posted by \lastfm. This is useful as an example of how one can (at least approximately)
validate the representativeness of a random walk sample in the absence of a known uniform reference sample.

\begin{figure}[h]
\centering
\subfigure[Popularity of artists]{
\includegraphics[scale=0.23, angle=0]{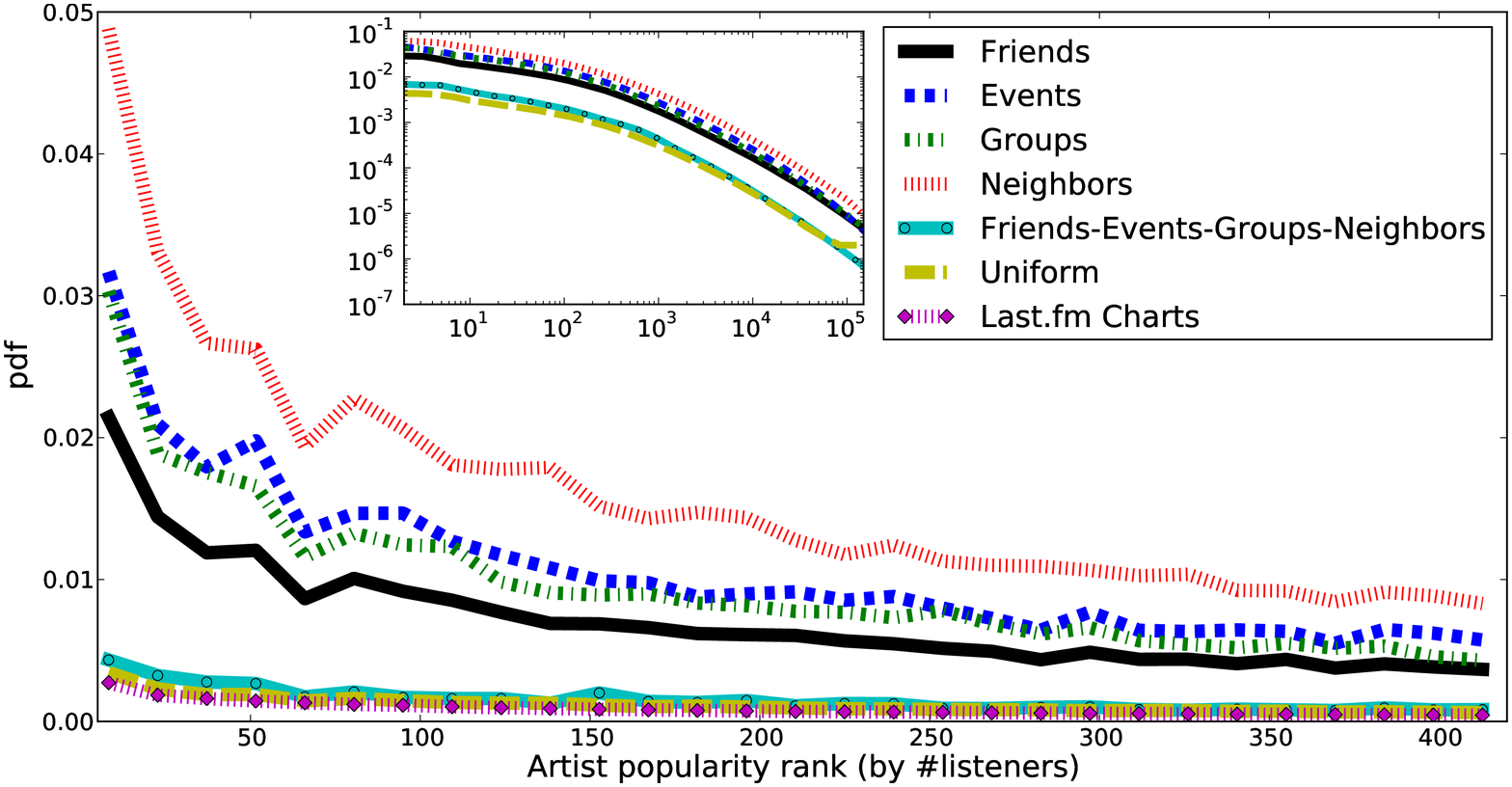}}
\subfigure[Popularity of tracks]{
\includegraphics[scale=0.23, angle=0]{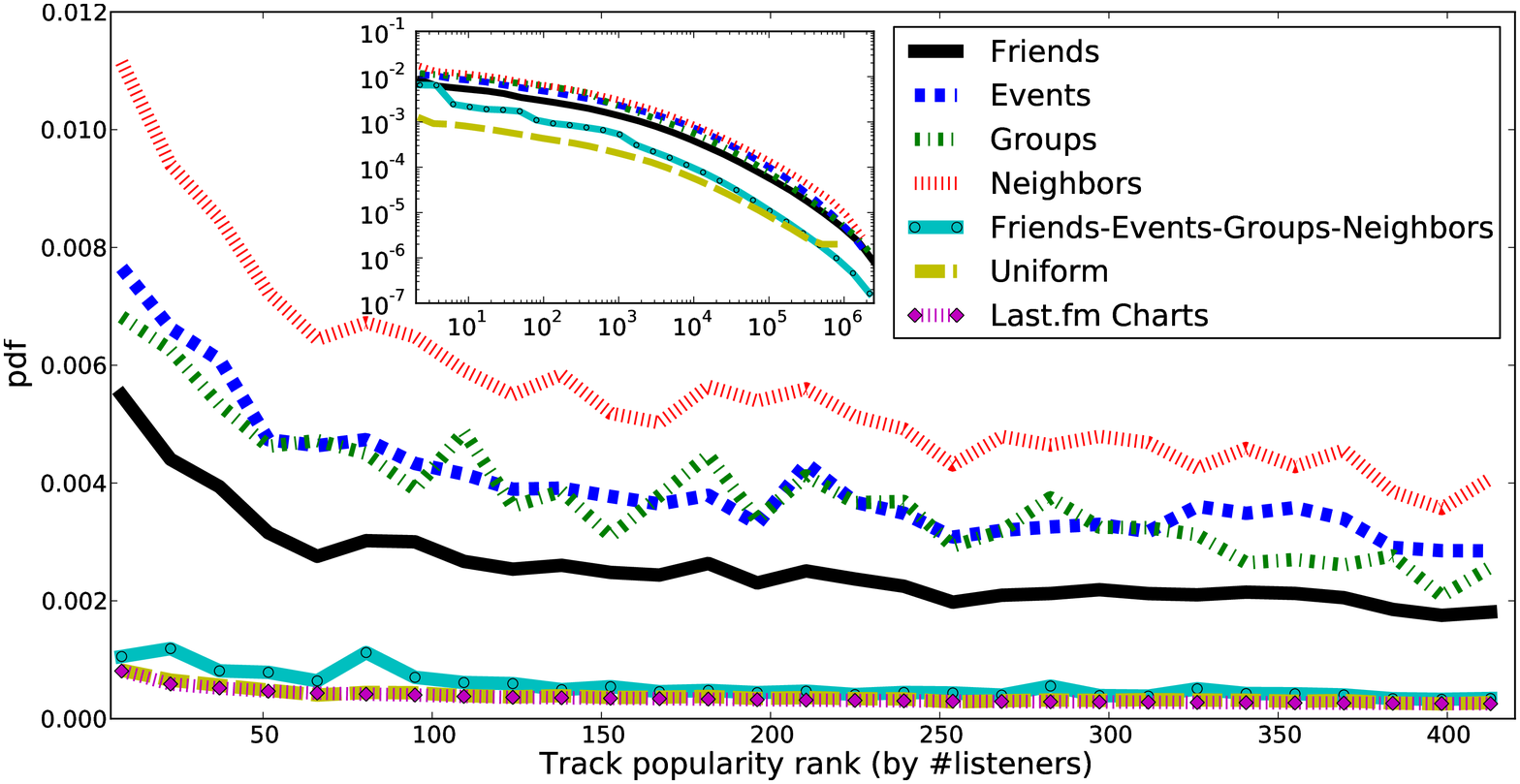}}
\caption{Weekly Charts
for the week 07/04-07/11. Artists/tracks are retrieved from ``\lastfm Charts`` and remain the same for all crawl types. Data is linearly binned (30 points). Inset: Artist/track popularity rank and percentage of listeners for all artists/tracks encountered in each crawl type.
}
\label{fig:comparison_charts}
\end{figure}

\lastfm reports on its website weekly music charts and statistics, generated automatically from user activity. To mention a few examples, ``Weekly Top Artists'' and ``Weekly Top Tracks'' as well as ``Top Tags``, ``Loved Tracks`` are reported. Each chart is based on the actual number of people listening to the track, album or artist recorded either through an Audioscrobbler plug-in (a free tracking service provided by the site) or the \lastfm radio stream. To validate the performance of multigraph sampling, we estimate the charts of ``Weekly Top Artists'' and ``Weekly Top Tracks'' from our sample of users for each of the crawl types in Table \ref{tab:datasets}, and we compare it to the published charts for the week July 04-July 11 2010, \ie the week just before the crawling started. To generate the charts from our user samples, we utilize API functions that allow us to fetch the exact list of artists and tracks that a user listened during a given date range. Fig. \ref{fig:comparison_charts} shows the observed artist/track popularity rank and the percentage of listeners for the top 420 tracks/artists (the maximum available) from the \lastfm Charts, with the estimated ranks and percentage of listeners for the same tracks/artists in each crawl type. As can be seen, the rank curve estimated from the multigraph Friends-Events-Groups-Neighbors tracks quite well the actual rank curve. Additionally, the curve that corresponds to the UNI sample is virtually lying on top of the ``\lastfm Charts'' line. On the other hand, the single graph crawl types Friends, Events, Groups, and Neighbors are quite far from actual charts. Here, as elsewhere, combining multiple relations gets us much closer to the truth than would reliance on a single graph.

\section{Related Work} \label{sec:mg_background}

Early graph exploration methods that were used to measure OSNs were based on BFS and snowball sampling \cite{Ahn-WWW-07, Mislove2007, Wilson09}. These methods have been shown to have a generally unknown bias towards high degree nodes when far from completion. In our recent and ongoing work, we attempt to correct for this bias \cite{Kurant2010, Kurant2011_JSAC_BFS}; however, BFS is out of the scope of this paper. Recent work in \cite{Rasti2008, Twitter08, Gjoka2010} used  random walks (where the bias is known) to sample users in OSNs, namely Friendster, Twitter and Facebook. Random walks have also been used to sample peer-to-peer networks \cite{Gkantsidis2004, Rasti09-RDS, Stutzbach2006-unbiased-p2p} and other large graphs \cite{Leskovec2006_sampling_from_large_graphs}.

Design of random walk techniques to improve mixing include \cite{Boyd2004_mixing, Ribeiro2010, GilksRoberts96, Thompson1990}.  Boyd \etal \cite{Boyd2004_mixing} pose the problem of finding the fastest mixing Markov Chain on a known graph as an optimization problem. However, in our case such an exact optimization is not possible since we are exploring an unknown graph. Ribeiro \etal \cite{Ribeiro2010} introduce Frontier sampling and explore multiple dependent random walks to improve sampling in disconnected or loosely connected subgraphs. Multigraph sampling  has the same goal but instead achieves it by exploring the social graph using multiple relations. Therefore, Frontier sampling is an orthogonal idea, which can potentially be combined with multigraph sampling for additional benefits. Multigraph sampling is also remotely related to techniques in the MCMC literature (\eg Metropolis-coupled MCMC or simulated tempering \cite{GilksRoberts96}) that seek to improve Markov chain convergence by mixing states across multiple chains with distinct stationary distributions. In \cite{Thompson1990, Thompson1991a} Thompson \etal introduce a family of adaptive cluster sampling (ACS) schemes, which are designed to explore nodes that satisfy some condition of interest; although random walk sampling is distinct from cluster sampling, the former does fit more broadly within the area of adaptive designs.

As noted in Section \ref{sec:mg_timescale}, we consider that the network of interest remains static during the duration of the crawl. We confirmed that this is a good approximation in the case of \lastfm by comparing two snapshots taken one week apart. Therefore, in this work, we do not consider dynamics, which are essential in other contexts \cite{Willinger09-OSN_Research, Rasti2008, acer2010random}.

Recent data collection studies of \lastfm include:
\cite{konstas2009social}, which develops a track recommendation system using social tags and friendship between users, \cite{schifanella2010folks}, which examines user similarity to predict social links,
and  \cite{baym2009tunes}, which explores the meaning of friendship in {\tt Last.fm} through survey sampling.
We emphasize that the importance of a representative sample is crucial to the usefulness of such datasets.

In our previous work \cite{Gjoka2010} and its extended version \cite{Gjoka2011_Facebook_JSAC}, we proposed a framework for crawling a single graph. %
In the implementation part of this paper, we adopt some of the practical recommendations of that work (\eg the use of the RWRW as the preferred crawling technique, the use of online convergence diagnostics, etc).
However, our focus here is on comparing multigraph sampling vs. single graph sampling, and on demonstrating its utility on fragmented networks such as \lastfm. 
To the best of our knowledge, our work is the first to explore sampling OSNs on a combination of multiple relations. 

\section{Conclusion}
\label{sec:mg_conclude}

In this paper, we have introduced {\em multigraph sampling} - a novel technique for random walk sampling of OSNs using multiple underlying relations. Multigraph sampling generates probability samples in the same manner as conventional random walk methods, but is more robust to poor connectivity and clustering within individual relations.  As we demonstrate using the \lastfm service, multigraph methods can give reasonable approximations to uniform sampling even where the overwhelming majority of users in each underlying relation are isolates, thus making single-graph methods fail.  Our experiments with synthetic graphs also suggest that multigraph sampling can improve  the coverage and the  convergence time for  partitioned or highly clustered networks.
Given these advantages, we believe multigraph sampling to be a useful addition to the growing suite of methods for sampling OSNs.

The focus of this paper was on (i) demonstrating the utility of multigraph sampling compared to singe graph sampling and (ii) on the design of a two-stage efficient algorithm that implements the idea.

Open questions include the selection of a few -out of many candidate- relations to use when sampling, so as to optimize the multigraph sampler performance.  Intuitively, we expect that negatively correlated relations will prove most effective. A related question is the weighting of the different relations for the same purpose. Gaining intuition into these problems will be particularly helpful in designing optimal OSN sampling schemes and is a direction for future work.

{
\bibliographystyle{IEEEtran}
\bibliography{osn-multigraph}
}

\vfill
\balance

\end{document}